\documentclass[journal]{IEEEtran}

\usepackage{graphicx}
\usepackage{amsmath,amssymb}
\usepackage{amsthm}
\usepackage{url}
\usepackage[style=ieee, backend=biber]{biblatex}
\newtheorem{lemma}{Lemma}
\usepackage[acronym, nopostdot, nomain]{glossaries} 
\usepackage{microtype} 
\newacronym{EC}{EC}{Edge Computing}
\newacronym{XEC}{XEC}{Extreme Edge Computing}
\newacronym{IoT}{IoT}{Internet of Things}
\newacronym{CC}{CC}{Cloud Computing}
\newacronym{QoS}{QoS}{Quality of Service}
\newacronym{URLLC}{URLLC}{Ultra-Reliable low Latency Communication}
\newacronym{AR}{AR}{Augmented Reality}
\newacronym{DI}{DI}{Distributed Inference}
\newacronym{VR}{VR}{Virtual Reality}
\newacronym{TI}{TI}{Tactile Internet}
\newacronym{SDN}{SDN}{Software-Defined Networking}
\newacronym{MEC}{MEC}{Mobile Edge Computing}
\newacronym{XED}{XED}{Extreme Edge Device}
\newacronym{AI}{AI}{Artificial Intelligence}
\newacronym{DL}{DL}{Deep Learning}
\newacronym{ML}{ML}{Machine Learning}
\newacronym{CV}{CV}{Computer Vision}
\newacronym{NLP}{NLP}{Natural Language Processing}
\newacronym{KKT}{KKT}{Karush–Kuhn–Tucker}
\newacronym{ILP}{ILP}{Integer Linear Program}
\newacronym{MILP}{MILP}{Mixed-Integer Linear Program}
\newacronym{LLM}{LLM}{Large Language Model}
\newacronym{FL}{FL}{Federated Learning}
\newacronym{GPD}{GPD}{Generalized Pareto Distribution}
\newacronym{MI}{MI}{Minimal Information} 
\newacronym{MLE}{MLE}{Maximum Likelihood Estimation} 
\newacronym{PDF}{PDF}{Probability Density Function} 
\newacronym{CDF}{CDF}{Cumulative Distribution Function} 
\newacronym{RV}{RV}{Random Variable} 
\newacronym{OU}{OU}{Ornstein-Uhlenbeck} 

\title{How Reliable is Your Service at the Extreme Edge? Analytical Modeling of Computational Reliability}

\author{MHD Saria Allahham,~\IEEEmembership{Student Member,~IEEE,} and Hossam S. Hassanein,~\IEEEmembership{Senior Member,~IEEE}
\thanks{M. S. Allahham and H. S. Hassanein are with the School of Computing, Queen's University, Kingston, ON, Canada (e-mail: 20msa7@queensu.ca; hossam@cs.queensu.ca).}}

\begin{document}
	\pagestyle{plain}
	\maketitle
	
	\begin{abstract}
		Extreme Edge Computing (XEC) distributes streaming workloads across consumer-owned devices, exploiting their proximity to users and ubiquitous availability. Many such workloads are AI-driven, requiring continuous neural network inference for tasks like object detection and video analytics. Distributed Inference (DI), which partitions model execution across multiple edge devices, enables these streaming services to meet strict throughput and latency requirements. Yet consumer devices exhibit volatile computational availability due to competing applications and unpredictable usage patterns. This volatility poses a fundamental challenge: how can we quantify the probability that a device, or ensemble of devices, will maintain the processing rate required by a streaming service? This paper presents an analytical framework for computational reliability in XEC, defined as the probability that instantaneous capacity meets demand at a specified Quality of Service (QoS) threshold. We derive closed-form reliability expressions under two information regimes: Minimal Information (MI), requiring only declared operational bounds, and historical data, which refines estimates via Maximum Likelihood Estimation from past observations. The framework extends to multi-device deployments, providing reliability expressions for series, parallel, and partitioned workload configurations. We derive optimal workload allocation rules and analytical bounds for device selection, equipping orchestrators with tractable tools to evaluate deployment feasibility and configure distributed streaming systems. We validate the framework using real-time object detection with YOLO11m model as a representative DI streaming workload; experiments on emulated XED environments demonstrate close agreement between analytical predictions, Monte Carlo sampling, and empirical measurements across diverse capacity and demand configurations.
	\end{abstract}

	\section{Introduction}
	\label{sec:intro}
	
	The growth in \gls{IoT} devices and demand for real-time data processing is reshaping distributed computing. Traditional \gls{CC} introduces latency that is prohibitive for time-sensitive applications due to centralized processing \cite{1_cao2020overview}. Content Delivery Networks (CDNs) cache static content closer to users but are ineffective for dynamic, computation-heavy workloads that require continuous data generation and processing rather than retrieval. \gls{EC} addresses this by shifting computation toward the network periphery, reducing round-trip latency by processing data near its source \cite{2_mao2017survey, mach2017mec_survey}. However, the server-centric edge model struggles to meet the demands of AI-powered applications such as real-time \gls{AR} \cite{chen2019ar_offload, chen2021ar_drl}, cloud gaming \cite{cai2023cloud_gaming, carrascosa2024cloud_gaming_latency}, and video analytics \cite{lin2020edge_video_survey, yi2017lavea}. These applications perform deep neural network inference for object detection, semantic segmentation, pose estimation, and scene understanding, imposing \gls{URLLC}-grade demands that require millisecond-scale latencies and sustained computational throughput, which can overwhelm individual edge servers when serving multiple concurrent users. \gls{DI}, where deep neural network execution is distributed across nearby devices, whether by partitioning input data or the model itself, addresses this capacity constraint. For video streaming services, DI enables parallel frame processing across local device pools, sustaining throughput that individual devices cannot achieve alone \cite{23_7946410, ibrahim_extra}.
	
	These limitations have driven the evolution toward \gls{XEC} \cite{4_8607067, sherif1, 10_tourani2020democratizing}, which decentralizes computation by utilizing consumer-owned \glspl{XED} such as smartphones, tablets, wearables, and \gls{IoT} sensors \cite{5_covi2021adaptive}. This paradigm shares roots with volunteer computing \cite{mengistu2019volunteer, anderson2004boinc}, which demonstrated the viability of harnessing distributed, non-dedicated resources for large-scale computation. Unlike volunteer computing that targets batch scientific workloads, \gls{XEC} targets latency-sensitive streaming services where devices contribute resources \textit{opportunistically} to nearby requesters. Related concepts include femtoclouds \cite{mtibaa2015femtocloud} and device-to-device (D2D) assisted edge computing \cite{he2022d2d_social, feng2022d2d_iiot}, which leverage proximity for mobile computing. More broadly, Computing Power Networks (CPNs) \cite{sun2024cpn_survey} provide a unified framework for orchestrating heterogeneous computing resources across cloud, edge, and end-device tiers. In CPNs, \gls{XEC} constitutes the outermost layer of this architecture, where consumer devices contribute computational capacity to the network. 
	

	However, \gls{XEC} introduces challenges when it comes to providing a reliable service. The characteristics of \glspl{XED}, including mobility, fluctuating computational capacities, unpredictable usage patterns, and limited observability, create a volatile computational environment \cite{li2015service, 6_eec_re_10019108}. Unlike enterprise-owned edge servers with predictable resource availability, consumer devices experience capacity fluctuations driven by concurrent applications, background processes, battery management, and thermal throttling. This raises the question: \textit{how can an orchestrator quantify the probability that an XED will sustain the computational performance required by a streaming service?}
	
	Existing research does not provide a formal analytical framework to quantify this computational reliability. Prior work on reliability in edge systems assumes: (a) stable, enterprise-owned resources; (b) discrete task models rather than continuous streaming demands; or (c) reliability definitions based on node/link failure rather than sustained computational performance against a \gls{QoS} threshold. In \gls{XEC}-enabled streaming, the concern shifts from binary node outages to \textit{computational reliability}: the probability that an \gls{XED} sustains processing of a continuous data stream according to specified \gls{QoS} requirements.
	
	This paper proposes an analytical framework for quantifying computational reliability of streaming services in \gls{XEC}. We model reliability from a \textbf{computational perspective}: the probability that a worker's available computational capacity meets or exceeds the service's instantaneous demand, satisfying the specified \gls{QoS} requirement. The framework addresses scenarios with varying information availability, from minimal operational bounds to refined historical observations, and scales from individual device analysis to multi-device configurations. The contributions are:
	\begin{itemize}
		\item We derive closed-form expressions for single-device computational reliability under two information regimes, enabling tractable reliability assessment: the \gls{MI} model requires only declared capacity and demand bounds, while the historical model refines estimates via \gls{MLE} with provable convergence to true reliability.

		\item We extend the analysis to multi-device systems and derive optimal workload allocation rules that maximize system reliability for series, parallel, and work-partitioned configurations across heterogeneous device pools.

		\item We establish analytical bounds for device selection: the maximum number of devices supportable in series configurations and the minimum required for parallel redundancy, given a target reliability threshold for a streaming service.

		\item We validate the framework experimentally using YOLO11m inference as a representative \gls{DI} workload, demonstrating close agreement between analytical predictions and empirical measurements across diverse operating conditions.
	\end{itemize}
	
	The rest of the paper is organized as follows: Section~\ref{sec:related_work} reviews related work. Section~\ref{sec:sys_model} presents the system model. Section~\ref{sec:modeling_streaming} develops the single-device reliability framework. Section~\ref{sec:system_reliability} extends the analysis to multi-device systems. Section~\ref{sec:sim_exp} validates the models through simulation, and Section~\ref{sec:conc} concludes.
	
	\section{Related Work}
	\label{sec:related_work}
	
	\begin{figure*}
	    \centering
	    \includegraphics[scale=0.65]{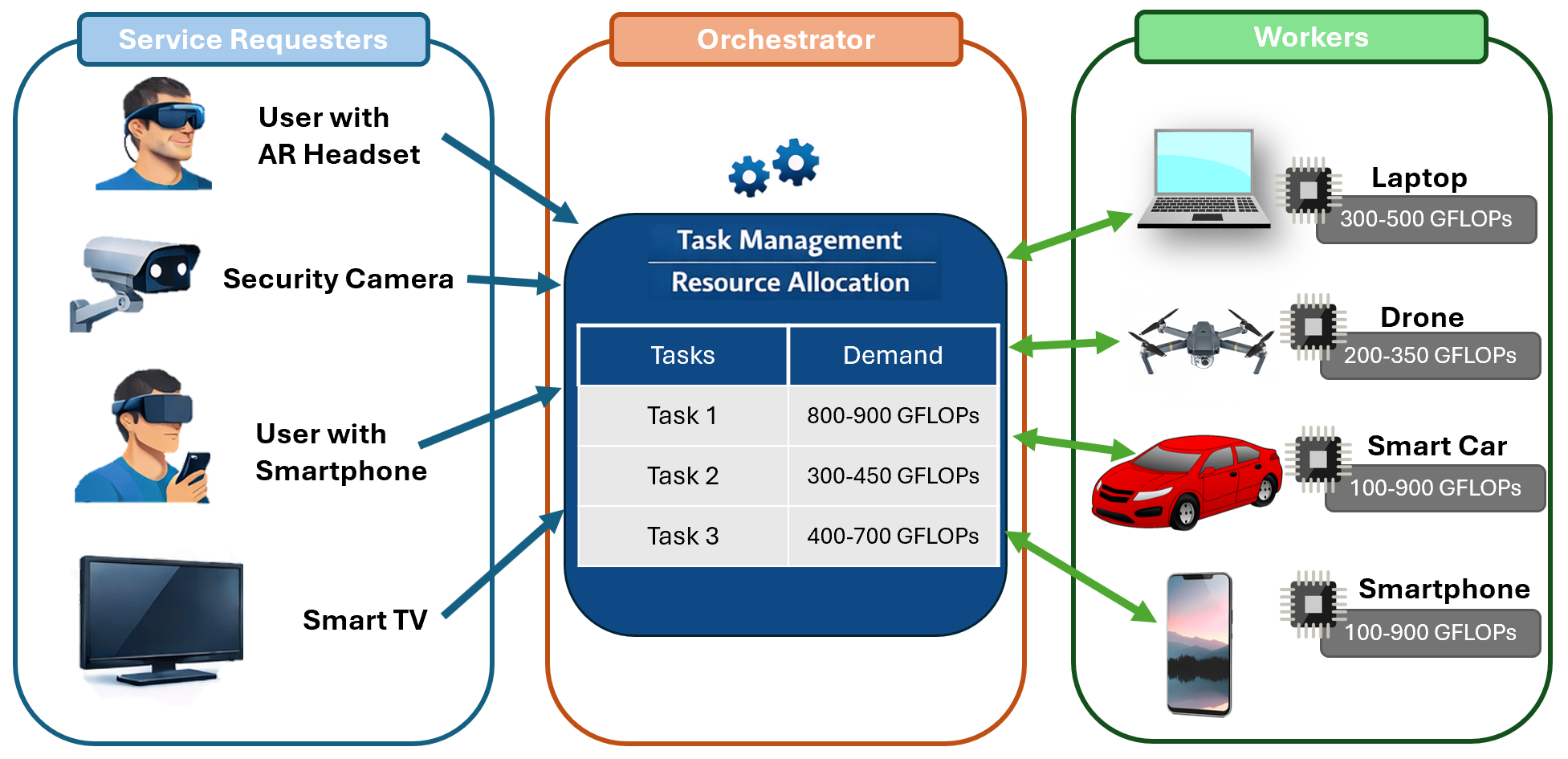}
	    \caption{The XEC system model}
	    \label{fig:sys_model}
	\end{figure*}
	The evolution from cloud to edge computing has generated research on task offloading and resource management. Surveys \cite{mach2017mec_survey, hazra2023resource_mec, wang2024task_offload_survey} cover MEC architectures, offloading strategies, and optimization objectives. In ultradense IoT networks, Guo et al. \cite{guo2018meco_ultradense} proposed game-theoretic offloading schemes for mobile-edge computation across heterogeneous base stations. These works primarily focus on minimizing latency or energy consumption under resource constraints, with reliability often treated as a secondary concern or assumed to be guaranteed by the underlying infrastructure.
	
	Video analytics at the edge has received attention due to its computational intensity and latency sensitivity. Yi et al. \cite{yi2017lavea} proposed LAVEA, a latency-aware video analytics platform that optimizes offloading decisions based on network conditions and server load. Surveys on edge video analytics \cite{lin2020edge_video_survey} cover applications, systems, and enabling techniques, highlighting the trade-offs between accuracy, latency, and resource consumption. Inference serving systems such as Jellyfish \cite{jellyfish2024inference} address end-to-end latency SLOs over dynamic edge networks, but focus on managed edge infrastructure rather than consumer-owned devices.
	
	For latency-sensitive applications like AR and cloud gaming, researchers have explored specialized offloading strategies. Chen et al. \cite{chen2019ar_offload, chen2021ar_drl} developed code-partitioning and deep reinforcement learning approaches for AR task offloading in MEC. Deep reinforcement learning has also been applied to optimize offloading policies in virtual edge computing systems without prior knowledge of network dynamics \cite{chen2019drl_offloading}. Cloud gaming research \cite{cai2023cloud_gaming, carrascosa2024cloud_gaming_latency} has addressed latency challenges through hybrid edge-cloud configurations and adaptive QoE models. However, these works assume dedicated edge servers with predictable resource availability, which does not hold in XEC environments.

	Reliability in distributed systems has traditionally been modeled through failure probabilities of individual components \cite{soi1981reliability_distributed, dai2014reliability_correlated}. Early work by Soi and Aggarwal \cite{soi1981reliability_distributed} defined distributed program reliability as the probability of successful execution requiring cooperation of multiple computers. Dai et al. \cite{dai2014reliability_correlated} extended this to heterogeneous systems with correlated failures, using integrated analytical and Monte Carlo approaches.
	
	Edge computing introduces distinct dependability challenges. Gill et al. \cite{gill2020dependability_edge} surveyed dependability challenges in edge computing, identifying resource constraints, device heterogeneity, and dynamic workloads as factors affecting reliability. Recent work on mission-critical applications \cite{chen2025mission_critical} reviewed resource management approaches that ensure reliability and application availability through fault-tolerant mechanisms.
	
	Runtime reliability monitoring has also been studied. Wang et al. \cite{wang2023bdetection} proposed B-Detection, which uses LSTM autoencoders to detect runtime reliability anomalies in MEC services. Deep learning has been applied to predict traffic dynamics in SDN-IoT networks, enabling adaptive resource allocation under varying load conditions \cite{tang2018sdn_iot_dl}. Chen et al. \cite{chen2023enhancing_resilience} addressed processing uncertainty in MEC through chance-constrained programming that offers probabilistic guarantees to task deadlines. These works provide insights into reliability detection and probabilistic guarantees but do not offer analytical frameworks for quantifying computational reliability of streaming services.
	
	Fault tolerance in edge computing commonly employs task replication and migration strategies. Nikolov et al. \cite{nikolov2022fault_pipeline} proposed automated fault tolerance pipelines using proactive horizontal scaling and node replication. Biswas et al. \cite{biswas2018reliable_iot} developed a reliable IoT-edge architecture that survives edge server failures through data replication and application redirection. Sun et al. \cite{sun2023fault_tolerance_survey} surveyed software fault tolerance in real-time systems, covering N-modular redundancy techniques where tasks are replicated and outputs are compared through voting mechanisms. These approaches enhance resilience but assume binary operational/failed states rather than continuous performance fluctuations.

	The concept of harnessing consumer devices for distributed computing has roots in volunteer computing. Mengistu and Che \cite{mengistu2019volunteer} surveyed volunteer computing systems, highlighting challenges in fault tolerance due to intermittent availability of resources when volunteers disconnect/connect at any time. BOINC \cite{anderson2004boinc} demonstrated large-scale public-resource computing but targeted batch scientific workloads rather than latency-sensitive streaming.
	
	Recent work has explored cooperative computing at the extreme edge. Zhang et al. \cite{union2023cooperative} proposed UNION, a fault-tolerant cooperative computing framework for opportunistic mobile edge clouds that estimates the probability of successful execution for specific opportunistic paths. This work is closest to ours in addressing reliability of opportunistic resources, but focuses on task-level success probability rather than sustained computational performance against QoS thresholds for streaming services.
	
	D2D-assisted mobile edge computing \cite{he2022d2d_social, feng2022d2d_iiot} leverages device proximity for task offloading. He et al. \cite{he2022d2d_social} incorporated social relationships to improve D2D link reliability, while Feng et al. \cite{feng2022d2d_iiot} studied task co-offloading in industrial IoT. These works optimize offloading decisions assuming known resource availability rather than modeling the stochastic computational capacity of consumer devices.

	Handling resource uncertainty in edge computing has been addressed through various stochastic optimization approaches. Liu et al. \cite{liu2020qos_statistical} argued that deterministic QoS guarantees are impractical due to high dynamics of wireless environments and instead focused on task offloading with statistical QoS guarantees where tasks complete before deadlines with specified probability. Apostolopoulos et al. \cite{apostolopoulos2023data_offload_uncertainty} addressed data offloading under resource uncertainty in UAV-assisted MEC, while Chen et al. \cite{chen2020energy_stochastic} modeled energy-aware application placement as a multi-stage stochastic program. Deep reinforcement learning has been applied to jointly optimize mode selection and resource management in fog radio access networks under dynamic edge cache states \cite{sun2019green_fog}. Franceschetti et al. \cite{pacca2023stochastic} proposed probabilistic admission control for managing edge offloading with stochastic workloads and deadlines.
	
	Task scheduling research has also addressed reliability concerns. Wang et al. \cite{wang2023reliability_aware} jointly studied reliability-aware data compression and task offloading for data-intensive applications in MEC, formulating optimization problems to minimize latency while satisfying reliability constraints. Chen et al. \cite{chen2023decentralized_scheduling} applied deep reinforcement learning for decentralized task scheduling in MEC. Joint caching and resource allocation in cooperative MEC systems, where multiple edge servers collaboratively serve user requests, has been optimized using hierarchical reinforcement learning \cite{zhang2024coop_mec}. Costa et al. \cite{costa2024cost_aware} addressed cost-aware service placement in the edge-cloud continuum.


	\section{System Model}
	\label{sec:sys_model}
	We consider an \gls{XEC} environment where consumer-owned devices (smartphones, tablets, wearables) process \gls{DI} workloads for streaming applications. Devices assume one of three roles: \textit{Service Requester} (originates processing demand), \textit{Worker} (provides computational resources), or \textit{Orchestrator} (coordinates placement and monitors \gls{QoS}). The orchestrator's challenge is selecting workers that reliably sustain the required performance despite resource volatility~\cite{6_eec_re_10019108}.

	Let $C_i(t)$ denote worker $i$'s available computational capacity (e.g., FLOPS) at time $t$, and $\Delta_i(t)$ the service's computational demand per unit time assigned to that worker. Both are stochastic processes: $C_i(t)$ fluctuates due to concurrent applications, background processes, and thermal throttling; $\Delta_i(t)$ varies with input complexity (e.g., scene density in object detection). This inherent variability precludes point-value capacity commitments (i.e., a worker cannot promise exactly how much capacity it will provide); workers instead declare operational bounds $[C_i^{\min}, C_i^{\max}]$, and services specify demand bounds $[\Delta_i^{\min}, \Delta_i^{\max}]$. We model $C_i(t)$ and $\Delta_i(t)$ as independent random variables at each time instant, which holds when device-local capacity fluctuations are uncorrelated with input-driven demand variations.

	The \gls{QoS} threshold $\Theta \geq 1$ specifies the minimum acceptable ratio $C_i(t)/\Delta_i(t)$. This ratio directly governs whether real-time deadlines are met. Consider video analytics at $f$ frames per second: each frame must complete within $1/f$ seconds. If one frame requires $\Delta_i(t)$ operations, sustaining the frame rate requires $C_i(t) \geq f \cdot \Delta_i(t)$, i.e., $C_i(t)/\Delta_i(t) \geq f$. More generally, $\Theta$ encodes processing headroom: $\Theta = 1$ means capacity exactly matches demand, while $\Theta > 1$ provides margin against variability. We define \textit{computational reliability} of worker $i$ as the probability that the \gls{QoS} condition is satisfied:
	\begin{equation}
		R_i(t) = P\left(\frac{C_i(t)}{\Delta_i(t)} \geq \Theta\right).
		\label{eq:reliability_def}
	\end{equation}
	This metric captures the continuous spectrum of performance degradation inherent to consumer devices, unlike binary failure models that assume nodes are either fully operational or failed.


	\section{Modeling of Streaming Service Reliability}
	\label{sec:modeling_streaming}

	This section derives closed-form expressions for $R_i(t)$ under two information regimes: Minimal Information (MI), using only the declared bounds, and historical data, incorporating past observations via MLE.
	
	\subsection{Modeling with Minimal Information}
	\label{subsubsec:streaming_MI_main}
	
	When the orchestrator encounters a new worker or service with no interaction history, reliability estimation must proceed with minimal information: only the declared bounds $[C_i^{\min}, C_i^{\max}]$ for allocable capacity and $[\Delta_i^{\min}, \Delta_i^{\max}]$ for expected demand, which workers must advertise as a prerequisite for participation in XEC. Without prior observations, the orchestrator has no basis to favor any value within these intervals.

	This absence of prior knowledge motivates uniform distributions as the natural modeling choice. The uniform distribution represents the constraint that values lie within bounds while making no additional assumptions. We model $C_i(t)$ and $\Delta_i(t)$ as independent, uniformly distributed random variables:
	$C_i(t) \sim \text{U}(C_i^{\min}, C_i^{\max})$ and $\Delta_i(t) \sim \text{U}(\Delta_i^{\min}, \Delta_i^{\max})$.
	Let $C_i^{\text{range}} = C_i^{\max} - C_i^{\min}$ and $\Delta_i^{\text{range}} = \Delta_i^{\max} - \Delta_i^{\min}$. The probability density functions (PDFs) are $f_{C_i}(c) = 1/C_i^{\text{range}}$ for $c \in [C_i^{\min}, C_i^{\max}]$, and $f_{\Delta_i}(\delta) = 1/\Delta_i^{\text{range}}$ for $\delta \in [\Delta_i^{\min}, \Delta_i^{\max}]$.

	The reliability $R^{\text{MI}}_i(t, \Theta) = P(C_i(t)/\Delta_i(t) \geq \Theta)$ is derived by integrating the joint PDF $f_{C_i,\Delta_i}(c,\delta) = f_{C_i}(c)f_{\Delta_i}(\delta)$ over the region defined by $C_i^{\min} \leq c \leq C_i^{\max}$, $\Delta_i^{\min} \leq \delta \leq \Delta_i^{\max}$, and $c \geq \Theta \delta$. The general integral form is:
	\begin{equation}
		R^{\text{MI}}_i(t, \Theta) = \int_{\Delta_i^{\min}}^{\Delta_i^{\max}} \int_{\max(C_i^{\min}, \Theta \delta)}^{C_i^{\max}} \frac{dc \, d\delta}{C_i^{\text{range}}\Delta_i^{\text{range}}}.
		\label{eq:R_MI_Theta_integral_form_main}
	\end{equation}
	This integral requires a careful case-by-case analysis based on the relative values of $\Delta_i^{\min}, \Delta_i^{\max}, C_i^{\min}/\Theta$, and $C_i^{\max}/\Theta$. A specific form of this integral, often used under simplifying assumptions, is presented below.
	
	\begin{figure*}[!t]
		\normalsize
		\begin{equation}
			\label{eq:MI_Theta_reliability_simplified_formatted}
			\begin{split}
				R^{\text{MI}}_i(t, \Theta) = \frac{1}{(C_i^{\max} - C_i^{\min}) (\Delta_i^{\max} - \Delta_i^{\min})} \Biggl[ & \left( C_i^{\max} \min\left(\Delta_i^{\max}, \frac{C_i^{\max}}{\Theta}\right) - \frac{\Theta}{2} \min\left(\Delta_i^{\max}, \frac{C_i^{\max}}{\Theta}\right)^2 \right) \\
				& - \left( C_i^{\max} \Delta_i^{\min} - \frac{\Theta (\Delta_i^{\min})^2}{2} \right) \Biggr]
			\end{split}
		\end{equation}
		\hrulefill
		\vspace*{4pt}
	\end{figure*}
	
	\begin{lemma}
		\label{lemma:streaming_MI_reliability_main}
		For an XED $i$ with computational capacity $C_i(t) \sim \text{U}(C_i^{\min}, C_i^{\max})$ and a streaming service with demand $\Delta_i(t) \sim \text{U}(\Delta_i^{\min}, \Delta_i^{\max})$, where $C_i(t)$ and $\Delta_i(t)$ are independent, if the condition $C_i^{\min} \leq \Theta \delta$ holds for the entire integration range of $\delta$ from $\Delta_i^{\min}$ to $\min(\Delta_i^{\max}, C_i^{\max}/\Theta)$, then the reliability $R^{\text{MI}}_i(t, \Theta)$ with QoS threshold $\Theta > 0$ is given by Eq.~\eqref{eq:MI_Theta_reliability_simplified_formatted}.
	\end{lemma}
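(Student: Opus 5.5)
We start from the general integral form Eq.~\eqref{eq:R_MI_Theta_integral_form_main}, which follows directly from the definition Eq.~\eqref{eq:reliability_def}. It uses independence, so that the joint PDF factors into the two uniform densities, together with the uniform supports.

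The first step is to split the outer $\delta$-range according to whether the inner integral is empty. For fixed $\delta$, the inner integral $\int_{\max(C_i^{\min},\Theta\delta)}^{C_i^{\max}} dc$ is nonnegative only when $\max(C_i^{\min},\Theta\delta)\le C_i^{\max}$, i.e.\ when $\delta \le C_i^{\max}/\Theta$. For $\delta > C_i^{\max}/\Theta$ the event $c\ge\Theta\delta$ is impossible within the support, so the contribution is zero. The effective upper limit is therefore $\delta^{\ast} := \min(\Delta_i^{\max}, C_i^{\max}/\Theta)$.

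The second step invokes the hypothesis of the lemma: $C_i^{\min}\le\Theta\delta$ for all $\delta\in[\Delta_i^{\min},\delta^{\ast}]$. This gives $\max(C_i^{\min},\Theta\delta)=\Theta\delta$ throughout, so the inner integral equals $C_i^{\max}-\Theta\delta$. Hence
\begin{equation*}
R^{\text{MI}}_i(t,\Theta)=\frac{1}{C_i^{\text{range}}\Delta_i^{\text{range}}}\int_{\Delta_i^{\min}}^{\delta^{\ast}}\left(C_i^{\max}-\Theta\delta\right)d\delta .
\end{equation*}

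The third step evaluates this elementary integral. The antiderivative is $C_i^{\max}\delta-\tfrac{\Theta}{2}\delta^2$. Evaluating it at $\delta^{\ast}$ and at $\Delta_i^{\min}$ gives exactly the two bracketed groups in Eq.~\eqref{eq:MI_Theta_reliability_simplified_formatted}, and multiplying by the normalizing constant $1/((C_i^{\max}-C_i^{\min})(\Delta_i^{\max}-\Delta_i^{\min}))$ completes the formula.

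The main point needing care is the truncation step. We must justify that restricting the outer range to $[\Delta_i^{\min},\delta^{\ast}]$ is legitimate: the general form would formally produce negative inner lengths for $\delta>C_i^{\max}/\Theta$, and these must be read as zero probability, not as negative contributions. We also implicitly need $\Delta_i^{\min}\le\delta^{\ast}$, i.e.\ $\Theta\Delta_i^{\min}\le C_i^{\max}$. Otherwise the integration interval is empty, the reliability is $0$, and the formula would return a negative value. We would note this as part of the regime in which the lemma's hypothesis is meaningful, since the hypothesis refers to a nonempty integration range. The remaining computations are routine.
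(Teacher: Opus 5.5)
Your proposal is correct and follows the paper's proof step for step: truncate the outer range at $\min(\Delta_i^{\max}, C_i^{\max}/\Theta)$, use the hypothesis to set the inner lower limit to $\Theta\delta$, and integrate $C_i^{\max}-\Theta\delta$. Your extra observation that $\Theta\Delta_i^{\min}\le C_i^{\max}$ is also needed is a genuine implicit assumption that the paper omits (on an empty range the lemma's hypothesis holds vacuously), but the direction of failure is wrong: when $\Theta\Delta_i^{\min}> C_i^{\max}$ the formula does not go negative, it returns the spurious positive value $(\Theta\Delta_i^{\min}-C_i^{\max})^2/(2\Theta\, C_i^{\text{range}}\Delta_i^{\text{range}})$, because $C_i^{\max}\delta-\Theta\delta^2/2$ peaks at $\delta=C_i^{\max}/\Theta$, so the error is silent rather than self-evident.
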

	\begin{proof}
		See Appendix~\ref{app:proof_lemma1}.
	\end{proof}
	This lemma provides the orchestrator with a closed-form reliability estimate for worker $i$ when no historical data is available. Given only the declared capacity and demand bounds, the orchestrator can immediately compute the probability that the worker meets the QoS requirement, without requiring any prior interactions or observations.

	\subsection{Modeling with Historical Data}
	\label{subsubsec:streaming_h_main}

	When the orchestrator has accumulated observations from prior interactions with worker $i$ or service, these historical data allow more accurate reliability estimation. Let $\{c_{i,1}, \dots, c_{i,t}\}$ and $\{\delta_{i,1}, \dots, \delta_{i,t}\}$ denote observed capacity and demand samples collected up to time $t$. With historical samples available, the orchestrator can estimate distributional parameters that better characterize system behavior. However, any estimated distribution must respect the declared operational bounds $[C_i^{\min}, C_i^{\max}]$ and $[\Delta_i^{\min}, \Delta_i^{\max}]$, which remain physical constraints regardless of observed data. Without loss of generality, we estimate the mean and variance from the samples, leading to a truncated normal distribution that captures the observed behavior while enforcing the declared bounds. The framework generalizes: depending on the application or environment, any parameters can be estimated via \gls{MLE} from the data, and any bounded distribution can be selected.

	Applying \gls{MLE} to these observations yields time-varying distributional parameters $(\mu_{C_i}(t), \sigma_{C_i}(t))$ and $(\mu_{\Delta_i}(t), \sigma_{\Delta_i}(t))$, updated as new observations arrive. This online refinement progressively reduces estimation uncertainty compared to the bounds-only MI approach. We model capacity $C_i(t)$ and demand $\Delta_i(t)$ as independent, truncated normal random variables bounded by the MI-specified limits:
	\begin{align}
		C_i(t) &\sim \text{TNORM}(\mu_{C_i}(t), \sigma_{C_i}^2(t), C_i^{\min}, C_i^{\max}), \nonumber \\
		\Delta_i(t) &\sim \text{TNORM}(\mu_{\Delta_i}(t), \sigma_{\Delta_i}^2(t), \Delta_i^{\min}, \Delta_i^{\max}). \nonumber
	\end{align}
	Let $\phi(\cdot)$ and $\Phi(\cdot)$ denote the PDF and CDF of the standard normal distribution, respectively.
	The PDF of $C_i(t)$ is $f_{C_i}(c; t) = \frac{1}{\sigma_{C_i}(t) Z_{C_i}(t)} \phi\left(\frac{c - \mu_{C_i}(t)}{\sigma_{C_i}(t)}\right)$ for $c \in [C_i^{\min}, C_i^{\max}]$, where $Z_{C_i}(t) = \Phi\left(\frac{C_i^{\max} - \mu_{C_i}(t)}{\sigma_{C_i}(t)}\right) - \Phi\left(\frac{C_i^{\min} - \mu_{C_i}(t)}{\sigma_{C_i}(t)}\right)$ is the normalization constant. A similar expression holds for $f_{\Delta_i}(\delta; t)$.

	Since the MLE parameters evolve with accumulated observations, the reliability estimate is time-dependent:

	\begin{equation}
		\begin{split}
			R^{\text{H}}_i(t, \Theta) &= P(C_i(t)/\Delta_i(t) \geq \Theta)= \\
			& \int_{\Delta_i^{\min}}^{\Delta_i^{\max}} \int_{\max(C_i^{\min}, \Theta \delta)}^{C_i^{\max}} f_{C_i}(c; t)f_{\Delta_i}(\delta; t) \,dc \,d\delta.
		\end{split}
		\label{eq:R_H_Theta_integral_form_general_main}
	\end{equation}

	As more observations are collected, the MLE estimates converge to the true distributional parameters, and $R^{\text{H}}_i(t, \Theta)$ converges to the true reliability. Under the simplifying assumption that $\Theta \delta \ge C_i^{\min}$ for the relevant range of $\delta$, the inner integral limit $\max(C_i^{\min}, \Theta \delta)$ becomes $\Theta \delta$.

	\begin{figure*}[!t]
		\normalsize
		\begin{equation}
			R^{\text{H}}_i(t, \Theta) = \frac{1}{Z_{C_i}(t) Z_{\Delta_i}(t)} \left[ \Phi\left(\frac{C_i^{\max} - \mu_{C_i}(t)}{\sigma_{C_i}(t)}\right) Z_{\Delta_i}(t) - \frac{1}{\sigma_{\Delta_i}(t)} \int_{\Delta_i^{\min}}^{\Delta_i^{\max}} \phi\left(\frac{\delta - \mu_{\Delta_i}(t)}{\sigma_{\Delta_i}(t)}\right) \Phi\left(\frac{\Theta \delta - \mu_{C_i}(t)}{\sigma_{C_i}(t)}\right) \,d\delta \right]
			\label{eq:H_Theta_reliability_presented_earlier_lemma_formatted}
		\end{equation}
		\hrulefill
		\vspace*{4pt}
	\end{figure*}

	\begin{lemma}
		\label{lemma:streaming_si_reliability_main}
		For worker $i$ with independent truncated normal capacity and demand distributions (as defined above), with MLE parameters updated from observations up to time~$t$, if $\Theta \delta \ge C_i^{\min}$ for all $\delta \in [\Delta_i^{\min}, \Delta_i^{\max}]$, then $R^{\text{H}}_i(t, \Theta)$ is given by Eq.~\eqref{eq:H_Theta_reliability_presented_earlier_lemma_formatted}, where $Z_{C_i}(t) = \Phi(\bar{C}_i^{\max}(t)) - \Phi(\bar{C}_i^{\min}(t))$, $Z_{\Delta_i}(t) = \Phi(\bar{\Delta}_i^{\max}(t)) - \Phi(\bar{\Delta}_i^{\min}(t))$, and $\bar{C}_i^{\max}(t) = (C_i^{\max}-\mu_{C_i}(t))/\sigma_{C_i}(t)$, etc.
	\end{lemma}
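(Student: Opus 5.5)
The plan is to evaluate the double integral in Eq.~\eqref{eq:R_H_Theta_integral_form_general_main} directly, one variable at a time. The inner integral over $c$ reduces to a difference of standard normal CDFs. The outer integral over $\delta$ then splits into a trivial normalization term and one residual integral. The statement leaves that residual integral unevaluated, so no closed form for the bivariate piece is needed.

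\emph{Step 1: inner integral.} By hypothesis $\Theta\delta \ge C_i^{\min}$ on the whole range of $\delta$, so $\max(C_i^{\min},\Theta\delta)=\Theta\delta$. Substitute $z=(c-\mu_{C_i}(t))/\sigma_{C_i}(t)$ in $f_{C_i}(c;t)$. The factor $1/\sigma_{C_i}(t)$ is absorbed by $dc=\sigma_{C_i}(t)\,dz$, which gives
\[
\int_{\Theta\delta}^{C_i^{\max}} f_{C_i}(c;t)\,dc=\frac{1}{Z_{C_i}(t)}\left[\Phi\bigl(\bar C_i^{\max}(t)\bigr)-\Phi\!\left(\frac{\Theta\delta-\mu_{C_i}(t)}{\sigma_{C_i}(t)}\right)\right].
\]

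\emph{Step 2: outer integral.} Multiply by $f_{\Delta_i}(\delta;t)=\phi\bigl((\delta-\mu_{\Delta_i}(t))/\sigma_{\Delta_i}(t)\bigr)/(\sigma_{\Delta_i}(t)Z_{\Delta_i}(t))$ and integrate over $[\Delta_i^{\min},\Delta_i^{\max}]$, using linearity.
\begin{itemize}
\item The constant term $\Phi(\bar C_i^{\max}(t))$ multiplies $\int f_{\Delta_i}\,d\delta=1$. We write this $1$ as $Z_{\Delta_i}(t)/Z_{\Delta_i}(t)$, using the same substitution as in Step 1.
\item The second term keeps the product $\phi(\cdot)\Phi(\cdot)$ under the integral, with prefactor $1/\sigma_{\Delta_i}(t)$.
\end{itemize}
Factoring out $1/(Z_{C_i}(t)Z_{\Delta_i}(t))$ reproduces Eq.~\eqref{eq:H_Theta_reliability_presented_earlier_lemma_formatted} exactly.

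\emph{Main obstacle: the upper limit of the inner integral.} This step needs care rather than heavy computation. The formula in Step 1 is correct only when $\Theta\delta\le C_i^{\max}$. For $\delta>C_i^{\max}/\Theta$ the true inner integral is $0$, but the CDF difference becomes negative. So the proof must state, and check against the lemma's hypotheses, that $\Theta\Delta_i^{\max}\le C_i^{\max}$; this parallels the role of $\min(\Delta_i^{\max},C_i^{\max}/\Theta)$ in Lemma~\ref{lemma:streaming_MI_reliability_main}. If this condition is not implied by the hypotheses, I would add it as an explicit assumption. Alternatively, I would replace the upper limit $\Delta_i^{\max}$ of the $\delta$-integrals by $\min(\Delta_i^{\max},C_i^{\max}/\Theta)$, with $Z_{\Delta_i}(t)$ in the first term replaced by the corresponding partial mass.

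Finally, I would remark that the residual integral $\int\phi(\cdot)\Phi(\cdot)\,d\delta$ is a bivariate normal probability, which can be evaluated numerically. Because the estimated parameters enter continuously, the convergence claim for $R^{\text{H}}_i(t,\Theta)$ then follows from the consistency of the MLE estimates $(\mu,\sigma)$ via the continuous mapping theorem.
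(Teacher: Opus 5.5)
Your proposal matches the paper's proof essentially step for step. The substitution $z=(c-\mu_{C_i}(t))/\sigma_{C_i}(t)$ turns the inner integral into a difference of $\Phi$ values. Linearity in $\delta$ then splits off the constant term $\Phi(\bar C_i^{\max}(t))$, whose $\delta$-integral yields the factor $Z_{\Delta_i}(t)$, while the $\phi\Phi$ integral is left unevaluated.

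Your caveat is well-founded and goes beyond the paper. The paper integrates $\delta$ over all of $[\Delta_i^{\min},\Delta_i^{\max}]$ without noting that the inner CDF difference becomes negative, rather than $0$, once $\Theta\delta>C_i^{\max}$. The stated hypothesis $\Theta\Delta_i^{\min}\ge C_i^{\min}$ does not imply $\Theta\Delta_i^{\max}\le C_i^{\max}$; bounds like the experimental $C_i\in[55,152]$, $\Delta_i\in[55,278]$ with $\Theta\ge 1$ already violate it. So your extra assumption, or the truncation at $\min(\Delta_i^{\max},C_i^{\max}/\Theta)$, is genuinely needed for Eq.~\eqref{eq:H_Theta_reliability_presented_earlier_lemma_formatted} to be exact.
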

	\begin{proof}
		See Appendix~\ref{app:proof_lemma2}.
	\end{proof}
	This lemma provides a refined reliability estimate for worker $i$ that improves as the orchestrator accumulates observations. Unlike the MI approach that treats all values within bounds as equally likely, this formulation incorporates learned distributional parameters, yielding reliability estimates that converge toward the true value over time. The integral term requires numerical evaluation but can be computed efficiently for real-time decisions.

	\section{System-Level Reliability in XEC}
	\label{sec:system_reliability}

	The preceding analysis quantifies the computational reliability of a single XED. However, many AI inference workloads exceed individual device capacity, requiring multi-device deployments that distribute workloads across a pool of XEDs. This section extends the reliability framework to systems of $N$ XEDs, addressing three interconnected decisions: Is a multi-device configuration feasible for a target reliability? How many and which devices should be selected? How should workload be partitioned to maximize system reliability? We answer these questions through closed-form expressions derived herein.

	\subsection{Independence in XEC Systems}
	\label{subsec:independence}

	Let each device $i \in \{1, \ldots, N\}$ have reliability $R_i(t, \Theta)$, computed via the MI framework or historical data approach of Section~\ref{sec:modeling_streaming}. A fundamental modeling assumption is statistical independence of device successes. In XEC, this assumption reflects the nature of consumer device ownership: each XED belongs to a different user whose computational load depends on personal usage patterns, active applications, background processes, and battery management policies. Unlike enterprise data centers where correlated failures arise from shared infrastructure, XEDs experience capacity fluctuations driven by independent human behaviors. This independence yields tractable system reliability expressions and reflects the decentralized character of XEC resource pools.
	
	\subsection{Series Configuration}
	\label{subsec:series}
	
	In a series configuration, the streaming service is decomposed such that each device handles a distinct portion of the workload, and all portions must complete for the service to succeed. This architecture arises in spatial partitioning where a single frame is divided across devices, each processing a subregion. The service fails if any device fails to meet its QoS constraint, yielding system reliability as the product of individual reliabilities:
	\begin{equation}
		R_{\text{series}}(t) = \prod_{i=1}^{N} R_i(t, \Theta_i).
		\label{eq:series_reliability}
	\end{equation}
	This multiplicative structure imposes a strict penalty: for $N$ identical devices each with reliability $R(t, \Theta)$, system reliability decays exponentially as $R(t, \Theta)^N$. Consequently, series configurations in XEC demand either highly reliable devices or minimal partitioning.
	
	\subsection{Parallel Configuration}
	\label{subsec:parallel}
	
	In a parallel configuration, multiple XEDs redundantly process the same workload, and the service succeeds if at least one device meets the QoS constraint. For video streaming applications with strict frame rate requirements, replicating each frame across multiple devices ensures that at least one device completes processing within the deadline, maintaining the target frame throughput even when individual devices intermittently fail to meet QoS. The system reliability becomes:
	\begin{equation}
		R_{\text{parallel}}(t) = 1 - \prod_{i=1}^{N} (1 - R_i(t, \Theta)).
		\label{eq:parallel_reliability}
	\end{equation}
	Here, reliability improves with each additional device. For $N$ identical devices each with reliability $R(t, \Theta)$, the system achieves $1-(1-R(t, \Theta))^N$, which approaches unity as $N$ grows. This configuration trades computational redundancy for resilience, using multiple nearby consumer devices to achieve reliability through replication rather than individual device guarantees.
	
	\subsection{Work Partitioning}
	\label{subsec:work_partitioning}
	
	In XEC, the orchestrator can partition the total throughput requirement across devices (i.e., partitioning the input or the model reduces the computational load per device, effectively splitting the throughput target), where each device contributes a fraction of the target rather than meeting the full threshold independently. Let $\alpha_i > 0$ denote the fraction of the total requirement assigned to device $i$, with $\sum_{i=1}^{N} \alpha_i = 1$. Each device faces a reduced effective threshold $\alpha_i \Theta$, and the system succeeds when all devices meet their respective reduced thresholds:
	\begin{equation}
		R_{\text{dist}}(t) = \prod_{i=1}^{N} R_i(t, \alpha_i \Theta).
		\label{eq:distributed_reliability}
	\end{equation}
	This formulation captures scenarios where an orchestrator splits a video stream across devices by frame ranges, or distributes inference batches proportionally to device capabilities. Since reliability $R_i(t, \cdot)$ decreases monotonically with threshold, smaller $\alpha_i$ values increase per-device reliability. The orchestrator must balance reduced per-device load against the multiplicative penalty of additional failure points.

	\begin{lemma}[Optimal Work Partitioning]
		\label{lemma:optimal_partitioning}
		Consider $N$ devices with differentiable reliability functions $R_i(t, \cdot): \mathbb{R}_{>0} \to (0,1]$, each monotonically decreasing in the threshold argument. The allocation $\boldsymbol{\alpha}^* = (\alpha_1^*, \ldots, \alpha_N^*)$ maximizing $R_{\text{dist}}(t) = \prod_{i=1}^{N} R_i(t, \alpha_i \Theta)$ subject to $\sum_{i=1}^{N} \alpha_i = 1$ satisfies the equal marginal log-reliability condition:
		\begin{equation}
			\frac{\partial R_i(t, \alpha_i^* \Theta)/\partial \Theta}{R_i(t, \alpha_i^* \Theta)} = \frac{\partial R_j(t, \alpha_j^* \Theta)/\partial \Theta}{R_j(t, \alpha_j^* \Theta)} \quad \forall\, i, j \in \{1, \ldots, N\}.
			\label{eq:optimal_partitioning_condition}
		\end{equation}
	\end{lemma}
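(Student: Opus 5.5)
The plan is to turn the product objective into a sum by taking logarithms and then apply a Lagrange multiplier argument on the simplex $\{\boldsymbol{\alpha} : \alpha_i > 0,\ \sum_i \alpha_i = 1\}$. Because each $R_i(t,\cdot)$ takes values in $(0,1]$, $\log R_i$ is well defined and finite. Since $\log$ is strictly increasing, $\boldsymbol{\alpha}^*$ maximizes $R_{\text{dist}}(t)$ if and only if it maximizes
\[
L(\boldsymbol{\alpha}) = \sum_{i=1}^{N} \log R_i(t, \alpha_i \Theta).
\]
This separable objective is differentiable in each $\alpha_i$ because each $R_i(t,\cdot)$ is differentiable.

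Next I would fix how to read the notation in Eq.~\eqref{eq:optimal_partitioning_condition}. I take $\partial R_i(t,\alpha_i^*\Theta)/\partial\Theta$ to mean the derivative of $R_i(t,\cdot)$ with respect to its threshold argument, evaluated at $\alpha_i^*\Theta$. Write this as $R_i'(t,\alpha_i^*\Theta)$. By the chain rule,
\[
\frac{\partial}{\partial \alpha_i} \log R_i(t, \alpha_i \Theta) = \Theta\,\frac{R_i'(t,\alpha_i\Theta)}{R_i(t,\alpha_i\Theta)}.
\]
I will remark that reading the notation literally, as the derivative of the composite with respect to $\Theta$, would give an extra factor $\alpha_i$. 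That reading would not produce the stated symmetric condition, so the threshold-argument reading is the intended one.

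The main step is to justify first-order necessary conditions at the maximizer. The feasible set is the relatively open simplex, because the lemma requires $\alpha_i > 0$. Hence any maximizer $\boldsymbol{\alpha}^*$ is an interior point of the affine set $\{\sum_i \alpha_i = 1\}$ intersected with the open orthant. The single equality constraint has gradient $(1,\ldots,1) \neq 0$, so the linear independence constraint qualification holds trivially. The Lagrange multiplier theorem therefore gives a scalar $\lambda$ with
\[
\Theta\,\frac{R_i'(t,\alpha_i^*\Theta)}{R_i(t,\alpha_i^*\Theta)} = \lambda \quad \text{for all } i.
\]
The same conclusion follows more elementarily by perturbing along $e_i - e_j$. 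For small $\epsilon$ of either sign, $\boldsymbol{\alpha}^* + \epsilon(e_i - e_j)$ stays feasible, so the directional derivative of $L$ at $\boldsymbol{\alpha}^*$ along $e_i - e_j$ must vanish. Since $\Theta > 0$ is common to all $i$, dividing by it yields Eq.~\eqref{eq:optimal_partitioning_condition}.

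The main obstacle is existence of a maximizer, which the statement presupposes ("the allocation maximizing \ldots"). Because the simplex is open, the supremum might only be approached at the boundary, for example as some $\alpha_j \to 0$. I would not try to prove existence. The lemma is a necessary condition on a maximizer assumed to exist, and the argument above uses only that such a maximizer lies in the open feasible set, which holds by definition of $\alpha_i > 0$.

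Monotonic decrease of each $R_i$ is not needed for necessity. I would note that it implies $R_i' \le 0$, so the common value $\lambda$ is nonpositive. This matches the interpretation of equalizing marginal log-reliability losses across devices. If one also wanted sufficiency, I would add log-concavity of $R_i(t,\cdot)$. Then $L$ is concave, and the stationarity condition becomes sufficient as well as necessary. I would leave this as a remark rather than part of the proof.
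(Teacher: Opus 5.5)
Your proposal is correct and follows the same route as the paper's proof. You take logarithms to get $\sum_i \ln R_i(t,\alpha_i\Theta)$, impose $\sum_i \alpha_i = 1$ with a single Lagrange multiplier, and equate the first-order conditions across devices after cancelling the common factor $\Theta$. Your extra remarks make explicit what the paper leaves tacit: reading $\partial R_i/\partial\Theta$ as the derivative in the threshold argument (matching the paper's implicit switch from $\partial R_i/\partial(\alpha_i\Theta)$), the relative-interior argument on the open simplex, and the presupposed existence of a maximizer.
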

	\begin{proof}
		See Appendix~\ref{app:proof_lemma3}.
	\end{proof}
	This lemma provides the orchestrator with a principled rule for distributing workload across heterogeneous devices. Rather than using heuristics or equal splitting, the orchestrator should allocate work such that all devices experience the same marginal reliability degradation per unit of additional load. In practice, this means assigning more work to devices whose reliability degrades more gracefully with increased demand.

	\textit{Homogeneous devices:} When $R_i(t, \cdot) = R_j(t, \cdot)$ for all $i, j$, symmetry implies $\alpha_i^* = 1/N$. Equal partitioning is optimal when devices have identical reliability characteristics.

	\textit{Heterogeneous devices:} Devices whose reliability degrades more slowly with load (smaller $|\partial R_i(t, \theta)/\partial \theta / R_i(t, \theta)|$) receive larger workload shares $\alpha_i$, concentrating demand on the stronger devices.

	\subsection{Device Selection for Target Reliability}
	\label{subsec:device_selection}
	
	An orchestrator managing an XEC resource pool faces a selection problem: given $M$ candidate devices with reliabilities $R_1(t, \Theta), \ldots, R_M(t, \Theta)$ and a target system reliability $\varepsilon$, how many and which devices should be recruited? The answer depends on the configuration.

	\begin{lemma}[Series System Feasibility]
		\label{lemma:series_feasibility}
		Given $M$ candidate devices with reliabilities $R_1(t, \Theta), \ldots, R_M(t, \Theta)$, sorted in descending order $R_{(1)}(t, \Theta) \geq \cdots \geq R_{(M)}(t, \Theta)$, and target system reliability $\varepsilon \in (0,1)$. For a series system requiring exactly $N$ devices, selecting the $N$ most reliable devices is feasible if and only if:
		\begin{equation}
			\prod_{i=1}^{N} R_{(i)}(t, \Theta) \geq \varepsilon.
			\label{eq:series_feasibility}
		\end{equation}
		When the number of devices is flexible, the maximum $N^*_{\text{series}}$ satisfying target $\varepsilon$ is:
		\begin{equation}
			N^*_{\text{series}} = \max\left\{N \in \{1, \ldots, M\} : \prod_{i=1}^{N} R_{(i)}(t, \Theta) \geq \varepsilon\right\}.
			\label{eq:series_max_general}
		\end{equation}
	\end{lemma}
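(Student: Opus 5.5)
The argument rests on the series reliability expression in Eq.~\eqref{eq:series_reliability} together with the fact that every $R_i(t,\Theta) \in [0,1]$. Its core is an exchange argument showing that, among all $N$-subsets of the $M$ candidates, the product of reliabilities is maximized by the $N$ largest values. Fix $N$ and let $S$ be any subset of size $N$. If $S$ is not the top-$N$ set $\{(1),\ldots,(N)\}$, then $S$ contains some device $j$ outside the top $N$, and some top-$N$ device $k$ is missing from $S$. Since $R_{(k)} \geq R_j$ and the remaining factors are nonnegative, replacing $j$ by $k$ does not decrease the product. After finitely many swaps we reach the top-$N$ set, so
\[
\prod_{i\in S} R_i(t,\Theta) \leq \prod_{i=1}^{N} R_{(i)}(t,\Theta).
\]
Ties cause no difficulty: any tie-breaking rule in the sorting gives the same product.

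The ``if and only if'' then splits into two directions. Feasibility of the top-$N$ selection means, by definition, $R_{\text{series}} = \prod_{i=1}^{N} R_{(i)} \geq \varepsilon$, which is exactly Eq.~\eqref{eq:series_feasibility}. The extremal property adds something stronger: if Eq.~\eqref{eq:series_feasibility} fails, then no $N$-subset can meet $\varepsilon$ either. So the condition characterizes feasibility of \emph{any} series system of exactly $N$ devices, not just the top-$N$ one. I will state this strengthening explicitly, since it is what justifies restricting attention to sorted prefixes.

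For Eq.~\eqref{eq:series_max_general}, define $P_N = \prod_{i=1}^{N} R_{(i)}$. Then $P_{N+1} = P_N R_{(N+1)} \leq P_N$ because $R_{(N+1)} \leq 1$, so $P_N$ is nonincreasing in $N$. The feasible set $\{N : P_N \geq \varepsilon\}$ is therefore a prefix $\{1,\ldots,N^*\}$ of $\{1,\ldots,M\}$, and its maximum is $N^*_{\text{series}}$. Combined with the previous paragraph, no configuration with more than $N^*_{\text{series}}$ devices can reach $\varepsilon$.

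The degenerate case is when even $P_1 = R_{(1)} < \varepsilon$. Then the set is empty, and I will note the convention that no series deployment is feasible. I expect the only real subtlety to be stating precisely what ``feasible'' quantifies over, namely the top-$N$ set versus all $N$-subsets. The exchange argument is routine once that is fixed.
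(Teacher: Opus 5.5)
Your proposal is correct and follows the same route as the paper's proof. Both show that the top-$N$ prefix maximizes $\prod_{i\in S} R_i(t,\Theta)$ over all $N$-subsets, read off the feasibility condition directly, and use the monotonicity of the prefix products (the paper's ``adding devices can only decrease reliability'') to identify $N^*_{\text{series}}$ as the largest feasible $N$; your explicit exchange argument, tie handling, and empty-set convention make precise the paper's informal claim that any other subset gives a \emph{smaller} product, which is only non-strict when ties are present.
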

	\begin{proof}
		See Appendix~\ref{app:proof_lemma4}.
	\end{proof}
	This lemma answers a key feasibility question: can a series configuration of $N$ devices meet the target reliability? The orchestrator sorts candidate devices by reliability and checks whether the product of the top $N$ reliabilities exceeds $\varepsilon$. If not, the configuration is infeasible with available devices.

	\textit{Special case (uniform reliability):} When all devices have identical reliability $R_{(i)}(t, \Theta) = R(t, \Theta)$, the constraint becomes $R(t, \Theta)^N \geq \varepsilon$, yielding the closed-form bound $N_{\max}^{(\text{series})} = \lfloor \ln \varepsilon / \ln R(t, \Theta) \rfloor$.

	\begin{lemma}[Parallel System Device Selection]
		\label{lemma:parallel_selection}
		Given $M$ candidate devices with reliabilities $R_1(t, \Theta), \ldots, R_M(t, \Theta)$, sorted in descending order $R_{(1)}(t, \Theta) \geq \cdots \geq R_{(M)}(t, \Theta)$, and target system reliability $\varepsilon \in (0,1)$. Define the cumulative failure probability $P_N(t) = \prod_{i=1}^{N}(1 - R_{(i)}(t, \Theta))$. The minimum number of devices achieving $R_{\text{parallel}}(t) \geq \varepsilon$ is:
		\begin{equation}
			N^*_{\text{parallel}} = \min\left\{N \in \{1, \ldots, M\} : P_N(t) \leq 1 - \varepsilon\right\}.
			\label{eq:parallel_exact}
		\end{equation}
	\end{lemma}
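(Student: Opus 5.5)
The plan is to reduce the lemma to two elementary facts. First, using independence and Eq.~\eqref{eq:parallel_reliability}, for any subset $S$ of candidates the parallel reliability is
\[
1-\prod_{i\in S}(1-R_i(t,\Theta)).
\]
So $R_{\text{parallel}}(t)\geq\varepsilon$ holds if and only if $\prod_{i\in S}(1-R_i(t,\Theta))\leq 1-\varepsilon$. Second, we need that, among all subsets of a fixed size $N$, the top-$N$ devices minimize this failure product.

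For the optimality step, I would use an exchange argument. Take any $S$ with $|S|=N$ containing some device $j$ outside the top $N$, and let $k$ be a top-$N$ device not in $S$. Then $R_k\geq R_j$, so $1-R_k\leq 1-R_j$. Replacing $j$ by $k$ multiplies the product by $(1-R_k)/(1-R_j)\leq 1$; if $1-R_j=0$, the product is already $0$ and the swap leaves it at $0$. Repeating the swap at most $N$ times shows that $\prod_{i\in S}(1-R_i)\geq P_N(t)$. Hence some $N$-subset meets the target if and only if $P_N(t)\leq 1-\varepsilon$.

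Next, I would show that $P_N(t)$ is nonincreasing in $N$. Indeed,
\[
P_{N+1}(t)=P_N(t)\bigl(1-R_{(N+1)}(t,\Theta)\bigr)\leq P_N(t),
\]
since $1-R_{(N+1)}(t,\Theta)\in[0,1]$. Therefore the feasible set $\{N: P_N(t)\leq 1-\varepsilon\}$ is upward closed. Its minimum, when the set is nonempty, is exactly the smallest number of devices that achieves the target, and it is attained by the top-$N^*_{\text{parallel}}$ devices. If $P_M(t)>1-\varepsilon$, then no selection works, and the lemma should be read with this infeasibility caveat, in the same way as the series lemma.

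The main obstacle is only the rigor of the exchange step, specifically the zero and ties edge cases. A cleaner alternative is to argue directly by sorting: order the elements of $S$ by reliability. The $m$-th largest reliability in $S$ is at most $R_{(m)}$, so a termwise comparison of the two products gives the bound without any swapping. I would use this termwise comparison as the primary argument.
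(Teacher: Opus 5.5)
Your proof is correct and follows the paper's route. You reduce $R_{\text{parallel}}(t)\geq\varepsilon$ to $P_N(t)\leq 1-\varepsilon$ for the top-$N$ devices and then use monotonicity of $P_N(t)$ in $N$. Your version is more careful in three places. The paper only asserts that the most reliable devices should be chosen, while your termwise comparison proves top-$N$ optimality over all size-$N$ subsets. The paper claims strict decrease because each $(1-R_{(i)})<1$, which fails when $R_{(i)}\in\{0,1\}$, and your \emph{nonincreasing} argument covers those cases. Finally, your caveat for $P_M(t)>1-\varepsilon$ handles the empty feasible set, which the paper does not address.
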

	\begin{proof}
		See Appendix~\ref{app:proof_lemma5}.
	\end{proof}
	This lemma tells the orchestrator how many redundant devices are needed to achieve a target reliability through parallel redundancy. Starting with the most reliable devices, the orchestrator adds devices until the cumulative failure probability drops below the acceptable threshold.

	\textit{Special case (uniform reliability):} When $R_{(i)}(t, \Theta) = R(t, \Theta)$ for all $i$, we have $P_N(t) = (1-R(t, \Theta))^N$, and the constraint $(1-R(t, \Theta))^N \leq 1-\varepsilon$ yields $N^*_{\text{parallel}} = \lceil \ln(1-\varepsilon) / \ln(1-R(t, \Theta)) \rceil$.

	\textit{Worst-case bound:} When only a lower bound $R_{\min}(t, \Theta) = \min_i R_i(t, \Theta)$ is available (e.g., from~MI), we substitute $R_{(i)}(t, \Theta) \geq R_{\min}(t, \Theta)$ to obtain $P_N(t) \leq (1-R_{\min}(t, \Theta))^N$. A sufficient condition for $P_N(t) \leq 1-\varepsilon$ is $(1-R_{\min}(t, \Theta))^N \leq 1-\varepsilon$, giving the conservative upper bound:
	\begin{equation}
		N^*_{\text{parallel}} \leq N_{\text{upper}}^{(\text{parallel})} = \left\lceil \frac{\ln(1 - \varepsilon)}{\ln(1 - R_{\min}(t, \Theta))} \right\rceil.
		\label{eq:parallel_upper_bound}
	\end{equation}
	This supports resource provisioning under uncertainty without detailed per-device reliability estimates.
	
	\section{Case Study: Real-Time Object Detection}
	\label{sec:case_study}

	The reliability framework developed in Sections~\ref{sec:modeling_streaming}--\ref{sec:system_reliability} applies to any streaming service where computational demand and capacity can be characterized by bounded distributions. To illustrate and validate the framework, we examine real-time object detection using YOLO (You Only Look Once) as a representative \gls{DI} workload. YOLO is a family of single-stage detectors that perform localization and classification in a single forward pass \cite{redmon2016yolo}. Let $f_\phi: \mathbb{R}^{H \times W \times 3} \to \mathcal{B}$ denote the detection function parameterized by network weights $\phi$, mapping an RGB image $\mathcal{I}$ of spatial dimensions $H \times W$ to a set of bounding boxes $\mathcal{B}$. YOLO11, the latest iteration, achieves state-of-the-art accuracy while enabling real-time inference \cite{yolo11_ultralytics}.

	The computational cost of evaluating $f_\phi(\mathcal{I})$ is dominated by convolutional operations. For an input of size $H \times W$, each convolutional layer with kernel size $k \times k$ and $C_{\text{out}}$ output channels requires $\mathcal{O}(k^2 \cdot C_{\text{in}} \cdot C_{\text{out}} \cdot H \cdot W)$ floating-point operations. Summing across $L$ layers, the total cost scales as:
	\begin{equation}
		\Gamma(f_\phi, \mathcal{I}) = \mathcal{O}\left( H \cdot W \cdot \sum_{\ell=1}^{L} k_\ell^2 \cdot C_{\text{in}}^\ell \cdot C_{\text{out}}^\ell \right) = \mathcal{O}(H \cdot W).
		\label{eq:yolo_complexity}
	\end{equation}
	The term $\sum_{\ell} k_\ell^2 C_{\text{in}}^\ell C_{\text{out}}^\ell$ depends only on the architecture (fixed by $\phi$), so computational cost is linear in the number of pixels $H \cdot W$. For YOLO11m at reference resolution $640 \times 640$, evaluating $f_\phi$ requires approximately 68 GFLOPs. Scaling the input by factor $s$ in each dimension yields an image of size $(sH) \times (sW)$ with $s^2 H W$ pixels, and cost $\Gamma(f_\phi, \mathcal{I}_s) \approx s^2 \cdot \Gamma(f_\phi, \mathcal{I})$.

	This linear dependence on pixel count enables efficient spatial partitioning. Decompose $\mathcal{I}$ into $N$ non-overlapping regions $\{\mathcal{I}_1, \ldots, \mathcal{I}_N\}$ with $\sum_i |\mathcal{I}_i| = |\mathcal{I}|$. Each region $\mathcal{I}_i$ can be processed independently: $f_\phi(\mathcal{I}_i)$ produces detections within that spatial extent, and the union $\bigcup_i f_\phi(\mathcal{I}_i)$ recovers full-frame detections (with boundary handling for objects spanning regions). The critical property is that $\Gamma(f_\phi, \mathcal{I}_i) \propto |\mathcal{I}_i|$, so distributing regions across $N$ workers reduces per-worker computation by factor $N$ while preserving total work. Fig.~\ref{fig:yolo_partitioning} illustrates this spatial partitioning approach.

	\begin{figure}[!t]
		\centering
		\begin{tabular}{c}
			\includegraphics[width=0.95\columnwidth]{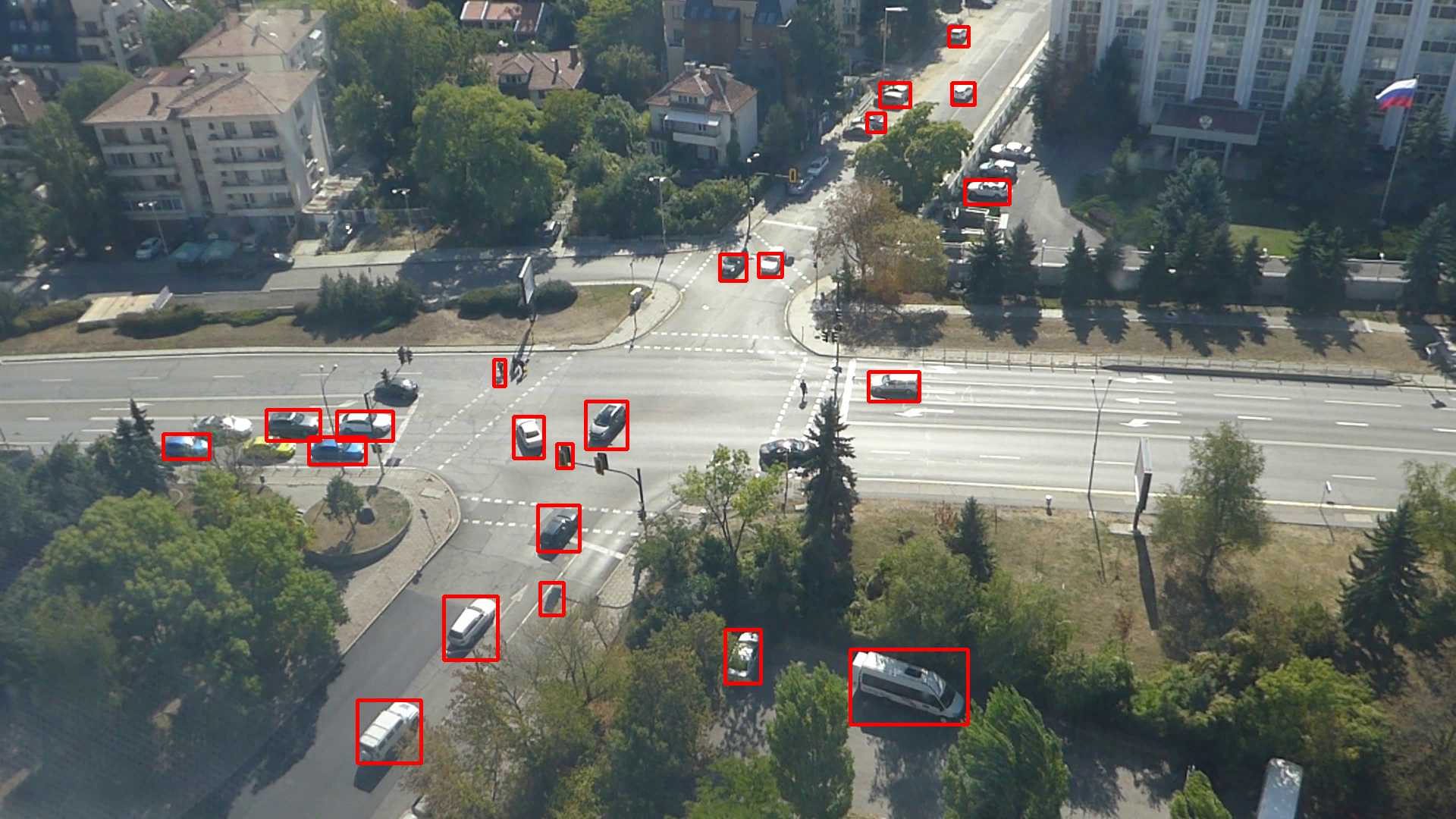} \\
			(a) \\[6pt]
			\includegraphics[width=0.95\columnwidth]{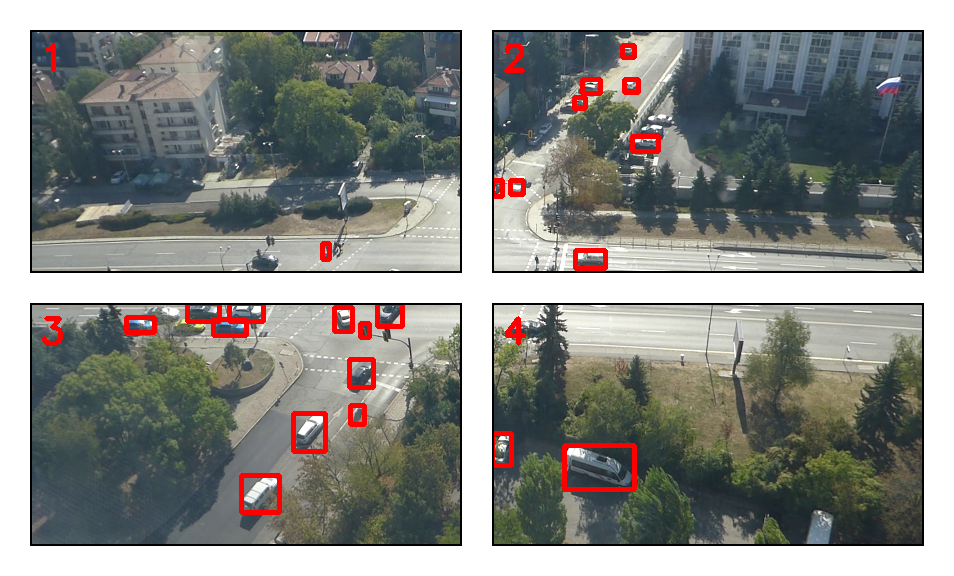} \\
			(b)
		\end{tabular}
		\caption{Spatial partitioning for distributed object detection. (a) Full frame processed on a single device. (b) Same frame partitioned into four quadrants, each processed independently by a separate worker.}
		\label{fig:yolo_partitioning}
	\end{figure}

	We now analyze the latency and throughput implications of distributed processing. For a single device with computational capacity $C$ (in GFLOPS) processing a full frame requiring $\Gamma$ GFLOPS, the inference latency is $\tau_{\text{single}} = \Gamma / C$ seconds, yielding throughput $T_{\text{single}} = C / \Gamma$ frames per second.

	Consider distributing the frame across $N$ workers. Let worker $i$ receive region $\mathcal{I}_i$ with $|\mathcal{I}_i| = |\mathcal{I}|/N$ pixels (assuming equal partitioning), requiring $\Gamma_i = \Gamma/N$ GFLOPS. Let $\tau_{\text{comm}}$ denote the one-way communication latency to transmit a region. The per-worker latency comprises transmission and computation: $\tau_i = \tau_{\text{comm}} + \Gamma_i / C_i$. Since all workers operate in parallel, the frame completes when the slowest worker finishes:
	\begin{equation}
		\tau_{\text{distributed}} = \max_{i \in \{1,\ldots,N\}} \left( \tau_{\text{comm}} + \frac{\Gamma/N}{C_i} \right).
		\label{eq:distributed_latency}
	\end{equation}
	For homogeneous workers with capacity $C_i = C$, this simplifies to $\tau_{\text{distributed}} = \tau_{\text{comm}} + \Gamma/(NC)$. The distributed throughput is:
	\begin{equation}
		T_{\text{distributed}} = \frac{1}{\tau_{\text{distributed}}} = \frac{1}{\tau_{\text{comm}} + \Gamma/(NC)} .
		\label{eq:distributed_throughput}
	\end{equation}
	When communication latency is negligible relative to computation ($\tau_{\text{comm}} \ll \Gamma/(NC)$), this reduces to $T_{\text{distributed}} \approx NC/\Gamma = N \cdot T_{\text{single}}$. The speedup factor approaches $N$ as the computation-to-communication ratio increases. The asymmetry between computation and communication costs makes this partitioning advantageous: transmitting a compressed region over local wireless requires 5-20 ms, while CPU inference requires hundreds of milliseconds, satisfying the condition $\tau_{\text{comm}} \ll \Gamma/(NC)$ in practice.

	This architecture constitutes a \textit{series configuration} (Section~\ref{subsec:series}): each worker processes a distinct region, and all must complete for the frame to be fully processed. System reliability equals the product $R_{\text{series}}(t) = \prod_{i=1}^{N} R_i(t, \Theta_i)$, where each $R_i(t, \Theta_i)$ quantifies the probability that worker $i$ meets its regional QoS constraint. The orchestrator selects workers using the analytical expressions from Sections~\ref{sec:modeling_streaming} and \ref{sec:system_reliability}.

	\section{Experiments and Simulation Results}
	\label{sec:sim_exp}

	To validate the analytical framework, we construct a software-based simulation environment emulating XED behavior under controlled conditions. The simulation captures key dynamics of computational reliability: fluctuating device capacity and variable service demand, while providing ground-truth measurements for comparison against analytical predictions.

	\subsection{Simulation Architecture}

	The simulation environment virtualizes XEDs using Docker containers, each representing an independent worker with configurable computational resources. Docker provides process-level isolation and precise control over CPU allocation, enabling us to emulate devices with different computational capabilities on a single host machine. Each container runs the YOLO11m object detection model, processing video frames as a representative streaming inference workload.

	The key insight enabling this simulation is that computational capacity and service demand can be independently controlled through two mechanisms: \textit{thread allocation} governs capacity, while \textit{frame scale} governs demand. By varying these parameters dynamically during video processing, we generate the stochastic capacity and demand distributions that the reliability framework models.

	\textbf{Capacity through thread allocation.} A container's computational capacity $C_i(t)$ is determined by the number of CPU threads allocated to inference at time $t$. Modern CPUs execute floating-point operations in parallel across threads; more threads yield higher throughput, measured in GFLOPS. We profile the YOLO11m model across thread counts from 2 to 12, establishing a mapping from thread count to achieved GFLOPS. During simulation, the orchestrator dynamically adjusts thread allocation, emulating how a real XED's available capacity fluctuates as background applications consume or release CPU resources. A container configured with thread range $[n_{\min}, n_{\max}]$ experiences capacity that varies within corresponding GFLOPS bounds $[C_i^{\min}, C_i^{\max}]$.

	\textbf{Demand through frame scale.} Service demand $\Delta_i(t)$ is controlled by the spatial resolution of input frames. As established in Section~\ref{sec:case_study}, YOLO's computational cost scales quadratically with frame dimensions. A \textit{scale factor} $s \in (0, 1]$ reduces both width and height proportionally, yielding a frame with $s^2$ times the pixels and approximately $s^2$ times the computational demand. For instance, scale $s = 0.5$ produces a frame with 25\% of the original pixels, requiring roughly 25\% of full-frame GFLOPS. By varying the scale factor during simulation, we generate demand sequences that span the range $[\Delta_i^{\min}, \Delta_i^{\max}]$ corresponding to scale bounds $[s_{\min}, s_{\max}]$.

	\textbf{QoS as frame rate.} The QoS threshold $\Theta$ translates directly to a frame rate requirement in the object detection context. If the application requires $f$ frames per second (FPS), each frame must complete within $1/f$ seconds. Given demand $\Delta_i$ GFLOPS per frame, sustaining $f$ FPS requires capacity $C_i \geq f \cdot \Delta_i$ GFLOPS, or equivalently $C_i/\Delta_i \geq f$. Thus, the QoS threshold $\Theta$ corresponds to the target FPS: $\Theta = f_{\text{target}}$. A frame processed successfully (capacity exceeds $\Theta$ times demand) meets the real-time deadline; a frame that fails this condition misses its deadline and degrades user experience.
\begin{figure*}[!th]
		\centering
		\begin{tabular}{ccc}
			\includegraphics[width=0.3\textwidth]{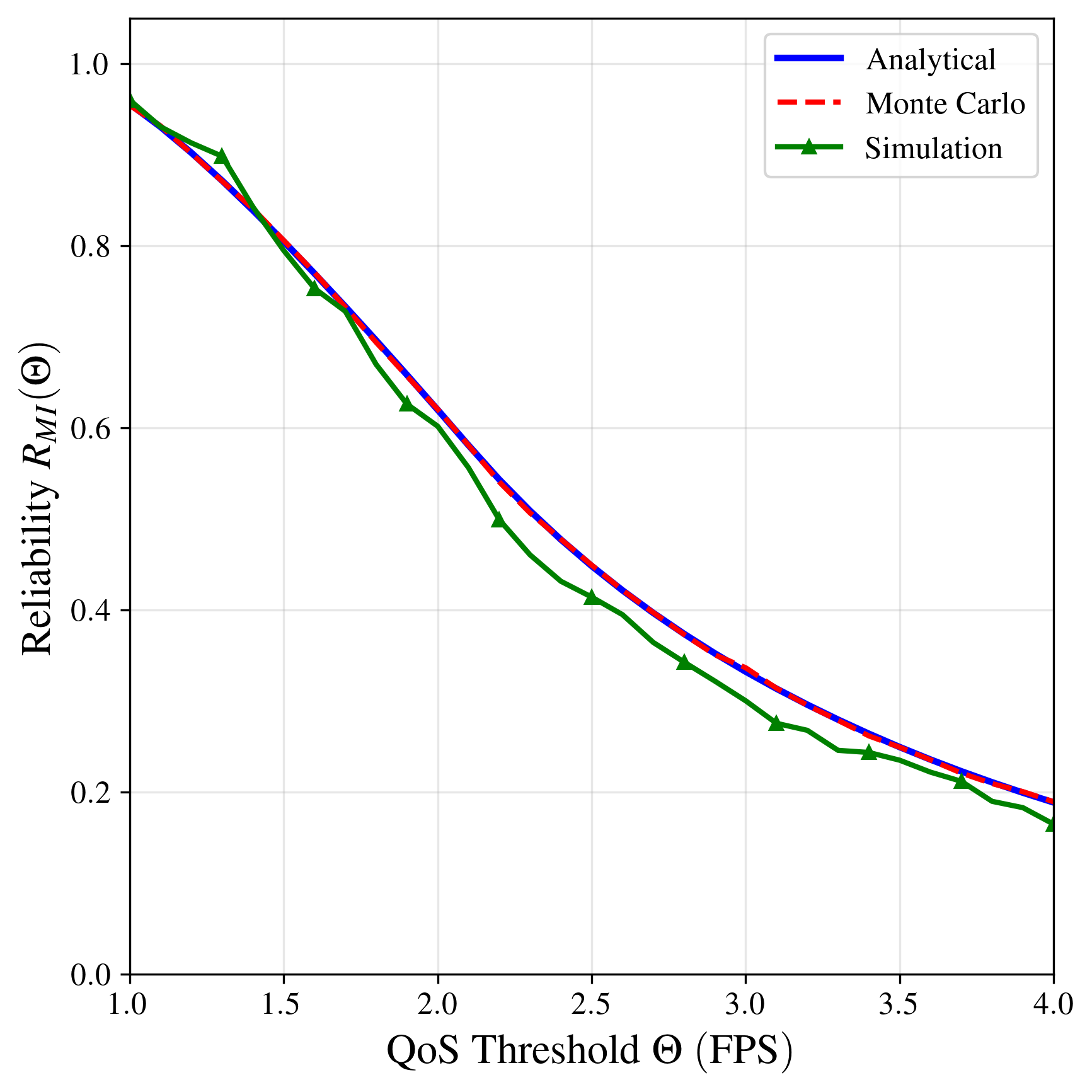} &
			\includegraphics[width=0.3\textwidth]{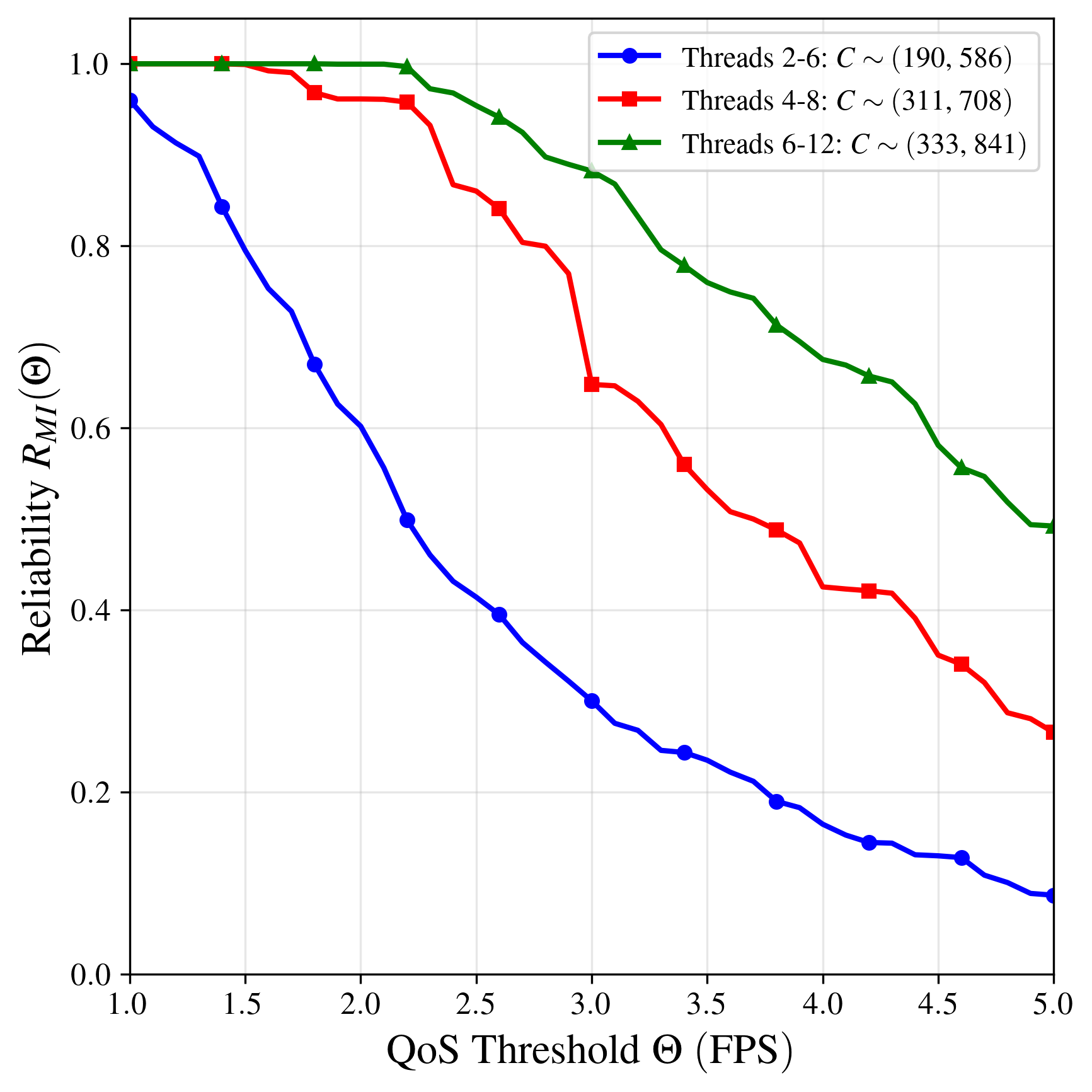} &
			\includegraphics[width=0.3\textwidth]{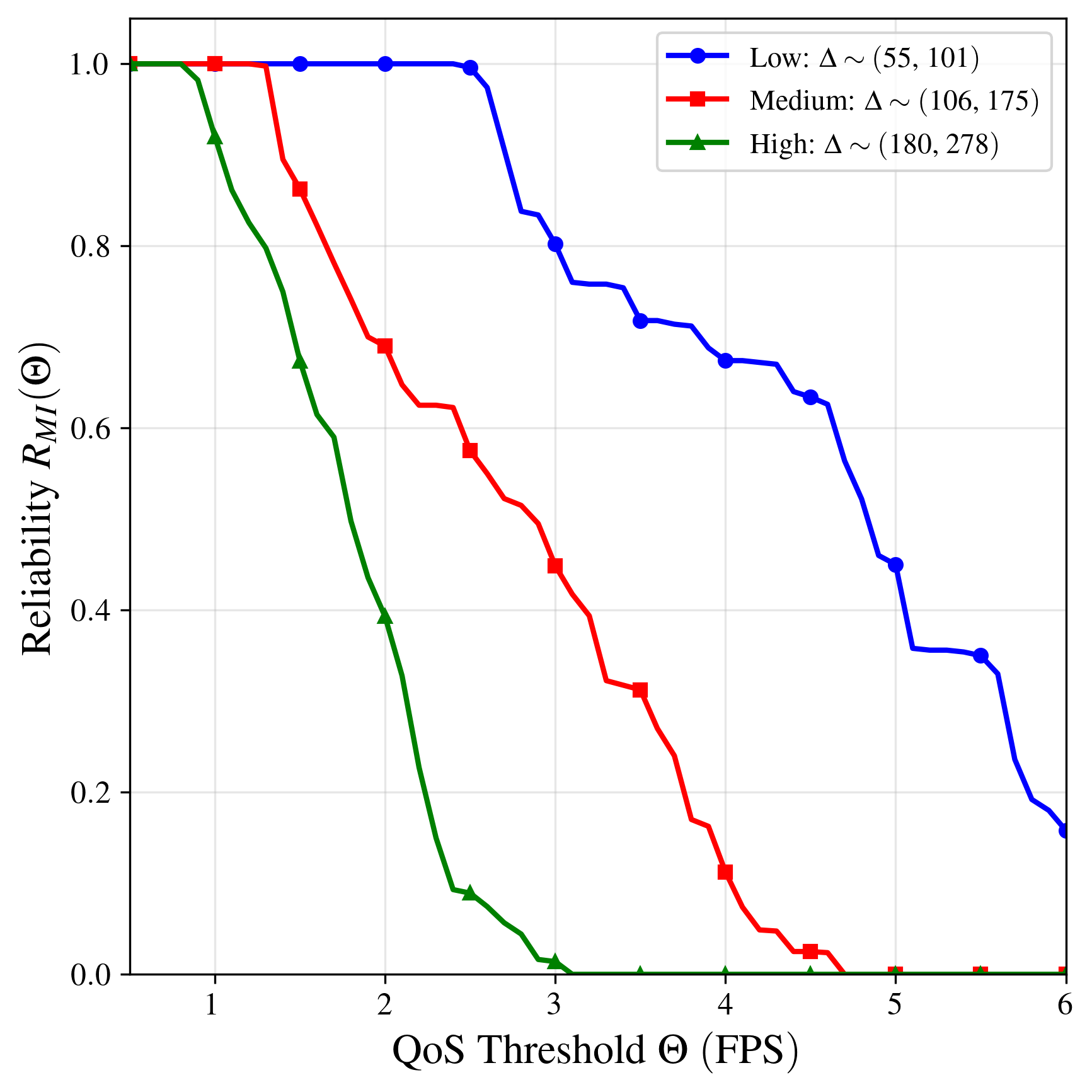} \\
			(a) & (b) & (c)
		\end{tabular}
		\caption{Validation of $R^{\text{MI}}_i(t, \Theta)$. (a) Analytical vs Monte Carlo vs simulation. (b) Effect of capacity. (c) Effect of demand.}
		\label{fig:mi_validation}
	\end{figure*}
	\subsection{Profiling and Calibration}

	Prior to reliability experiments, we profile the YOLO11m model exhaustively across all thread-scale combinations. For each configuration (threads $n \in \{2, \ldots, 12\}$, scale $s \in \{0.1, 0.2, \ldots, 0.9\}$), we measure inference time over multiple runs to obtain stable estimates. This profiling yields two critical mappings: (i) thread count to capacity in GFLOPS, and (ii) scale factor to demand in GFLOPS. These mappings allow the simulation to translate thread and scale configurations into the $(C_i, \Delta_i)$ pairs that the analytical framework requires.

	The profiling reveals expected scaling behavior. Capacity increases approximately linearly with thread count up to the physical core limit, with diminishing returns as threads compete for shared resources. Demand scales quadratically with scale factor, consistent with the convolutional structure of the YOLO architecture. At full scale ($s = 0.9$) with minimal threads ($n = 2$), inference requires over one second per frame (sub-1 FPS). At reduced scale ($s = 0.25$) with maximal threads ($n = 12$), inference completes in under 40 ms (exceeding 25 FPS). This dynamic range, spanning nearly two orders of magnitude in throughput, provides a rich space for reliability analysis.

	\subsection{Experimental Procedure}

	Each experiment processes a video stream while dynamically varying thread allocation and frame scale according to specified distributions. At regular intervals (every 30 or 20 frames), the simulation samples new values for threads and scale from configured ranges, emulating the stochastic fluctuations of real XED environments. For each frame, we record: (i) the thread count (capacity proxy), (ii) the scale factor (demand proxy), (iii) inference time, and (iv) whether the frame met the QoS threshold.

	The empirical reliability is computed as the fraction of frames meeting the QoS constraint over the experiment duration. We compare this empirical value against the analytical predictions from the MI framework (using only the declared bounds) and the historical data framework (using MLE-estimated parameters from observations). This comparison validates both the accuracy of the closed-form expressions and the benefit of incorporating historical observations.
	
	\subsection{Simulation Results}

	Fig.~\ref{fig:mi_validation}(a) validates the $R^{\text{MI}}_i(t, \Theta)$ closed-form expression (Eq.~\ref{eq:MI_Theta_reliability_simplified_formatted}) against Monte Carlo sampling and empirical simulation. The experiment processes 2590 video frames with thread allocation uniformly sampled from $[2,6]$ and frame scale from $[0.4,0.9]$, yielding capacity bounds $C_i\in[55,152]$ GFLOPS and demand bounds $\Delta_i\in[55,278]$ GFLOPS. Monte Carlo evaluation draws $10^5$ independent $(C_i,\Delta_i)$ pairs from the uniform distributions and computes $P(C_i/\Delta_i \geq \Theta)$. The simulation curve reports the empirical fraction of frames achieving each QoS threshold. All three curves align across the threshold range $\Theta\in[1,4]$ FPS, confirming the analytical derivation.

	Fig.~\ref{fig:mi_validation}(b) examines the effect of computational capacity on reliability. Three configurations vary thread allocation while holding the scale range fixed at $[0.4,0.9]$: threads $\in[2,6]$ (mean FPS 2.66), $[4,8]$ (mean FPS 4.20), and $[6,12]$ (mean FPS 5.47). Higher thread counts increase capacity bounds, shifting the reliability curve rightward. A device with threads $\in[6,12]$ maintains $R>0.9$ up to $\Theta\approx3$ FPS, whereas threads $\in[2,6]$ drops below 0.9 at $\Theta\approx1.5$ FPS.

	Fig.~\ref{fig:mi_validation}(c) isolates the effect of demand by partitioning the same experiment (threads $\in[2,6]$) into three demand regimes based on frame scale. Low demand ($\Delta_i<104$ GFLOPS, scale $<0.55$) achieves near-unity reliability up to $\Theta\approx4$ FPS. High demand ($\Delta_i>178$ GFLOPS, scale $>0.72$) degrades reliability rapidly, falling below 0.5 at $\Theta\approx2$ FPS. This confirms the intuition that reliability depends on the ratio $C_i/\Delta_i$: reducing demand has an equivalent effect to increasing capacity.

	\begin{figure*}[!t]
		\centering
		\begin{tabular}{cccc}
			\includegraphics[width=0.22\textwidth]{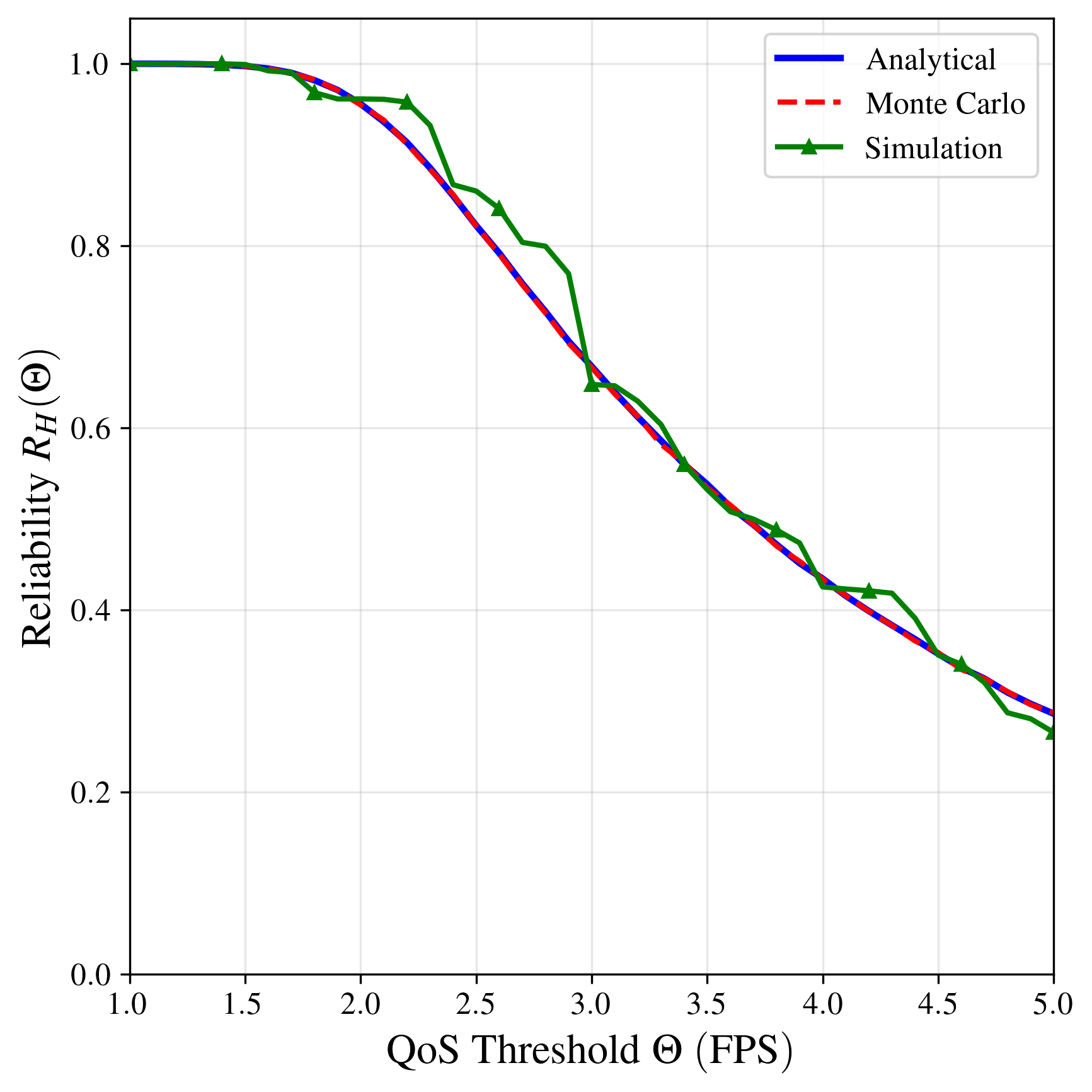} &
			\includegraphics[width=0.22\textwidth]{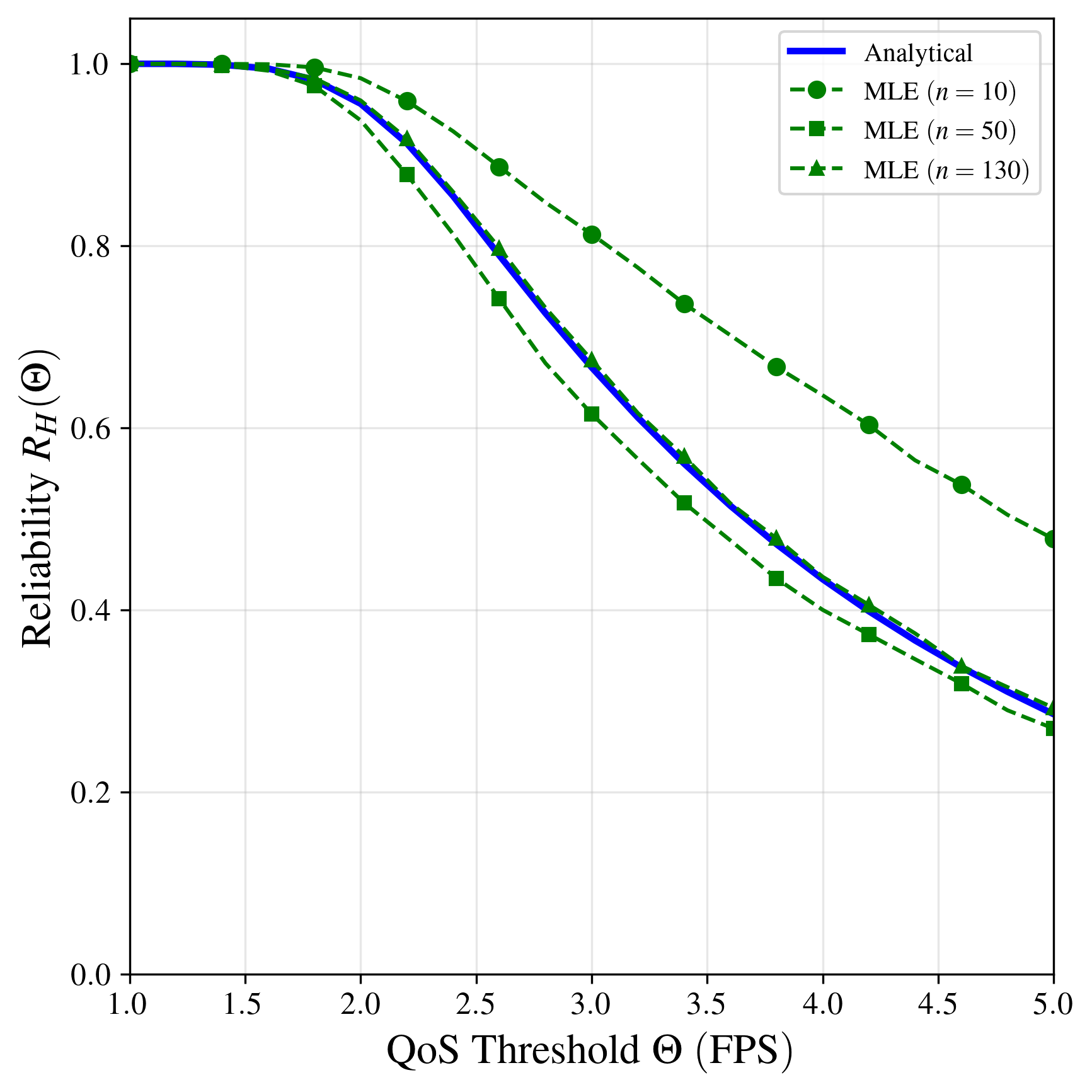} &
			\includegraphics[width=0.22\textwidth]{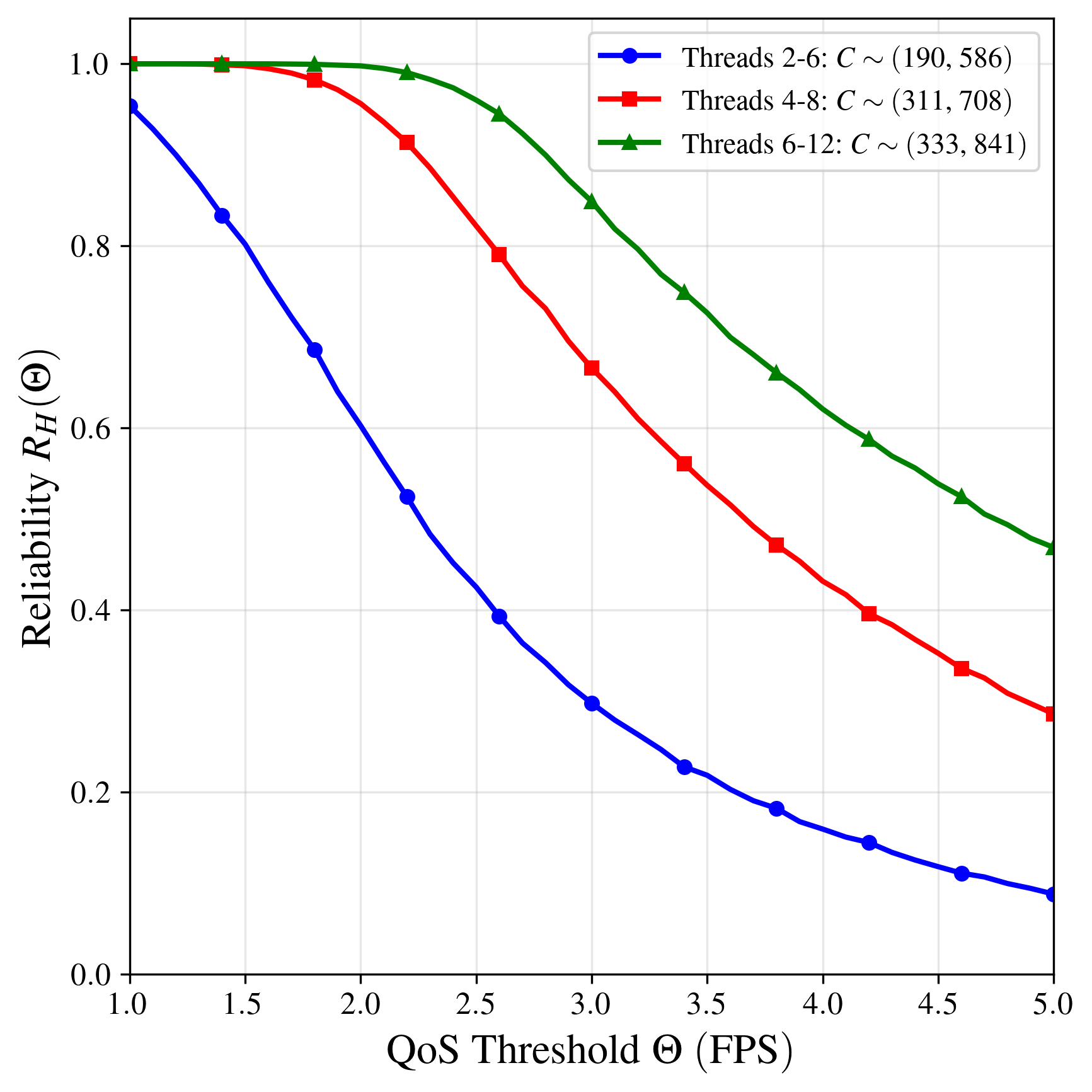} &
			\includegraphics[width=0.22\textwidth]{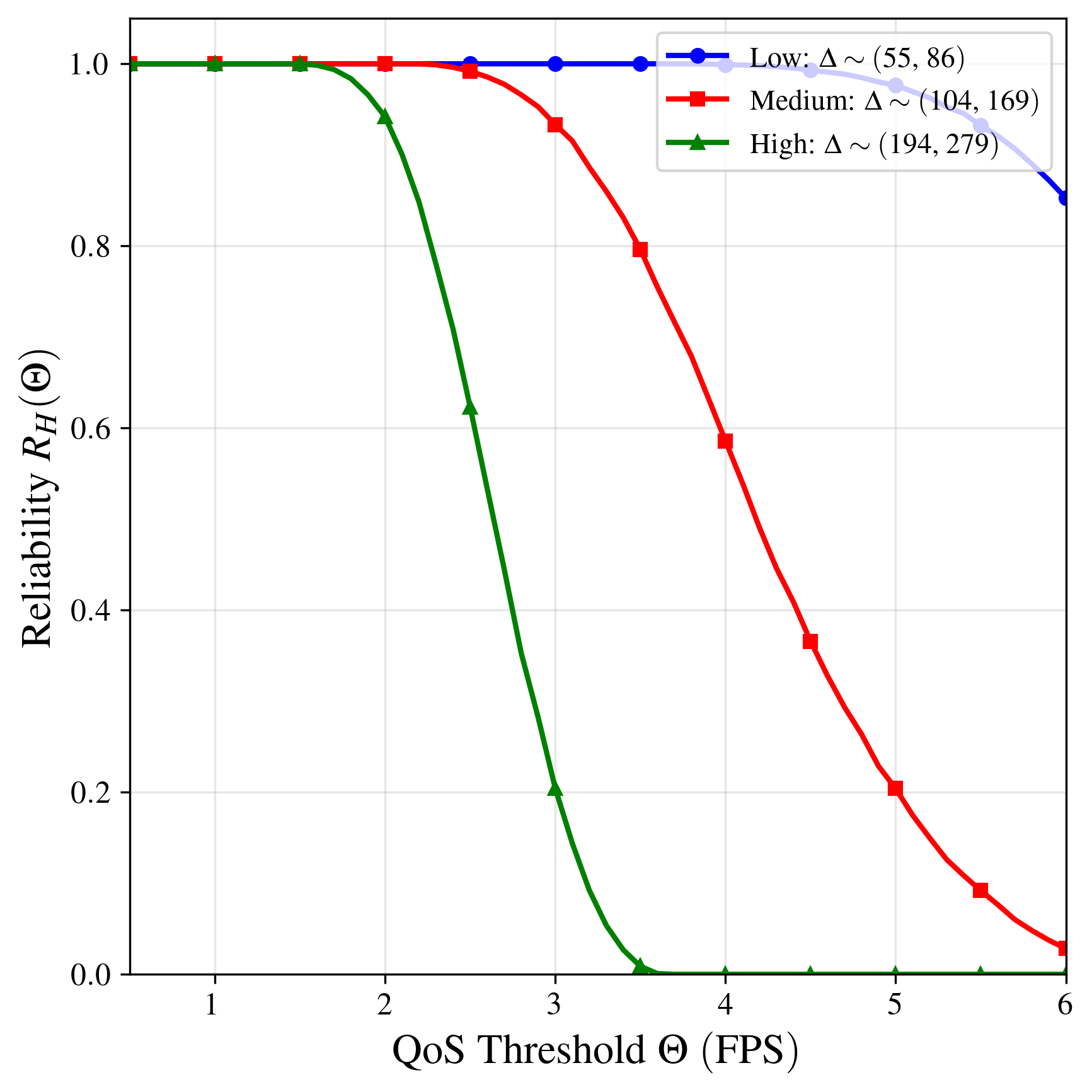} \\
			(a) & (b) & (c) & (d)
		\end{tabular}
		\caption{Validation of $R^{\text{H}}_i(t, \Theta)$. (a) Analytical vs MC vs simulation. (b) MLE convergence with sample size. (c) Effect of capacity. (d) Effect of demand.}
		\label{fig:rh_validation}
	\end{figure*}

	Fig.~\ref{fig:rh_validation} validates the historical data framework $R^{\text{H}}_i(t, \Theta)$ (Lemma~2), demonstrating that MLE-refined estimates yield tighter predictions than the bounds-only $R^{\text{MI}}_i(t, \Theta)$ approach.

	Fig.~\ref{fig:rh_validation}(a) compares the analytical $R^{\text{H}}_i(t, \Theta)$ expression against Monte Carlo sampling and empirical simulation for threads $\in[4,8]$ and scale $\in[0.4,0.9]$. MLE fits truncated normal distributions to 2590 observed $(C_i,\Delta_i)$ pairs, yielding parameters $(\mu_{C_i},\sigma_{C_i})$ and $(\mu_{\Delta_i},\sigma_{\Delta_i})$ that capture the empirical concentration around the mean. The three curves align closely, confirming the integral formulation in Eq.~\ref{eq:H_Theta_reliability_presented_earlier_lemma_formatted}.

	Fig.~\ref{fig:rh_validation}(b) examines how MLE estimates converge as sample size increases. With only $n=10$ independent samples, the estimated reliability deviates visibly from the ground truth. At $n=50$, the estimate improves substantially, and at $n=130$ (the full dataset subsampled at the change interval), it matches the analytical curve. This demonstrates that orchestrators can progressively refine reliability predictions as interaction history accumulates.

	Fig.~\ref{fig:rh_validation}(c) and (d) mirror the capacity and demand sensitivity analyses from Fig.~\ref{fig:mi_validation}, now using the $R^{\text{H}}_i(t, \Theta)$ framework. The qualitative behavior is consistent: higher capacity shifts the curve rightward, lower demand shifts it rightward. However, $R^{\text{H}}_i$ predictions are sharper because MLE captures the actual distribution shape rather than assuming uniformity over the full bounds. For instance, at $\Theta=3$ FPS with threads $\in[4,8]$, $R^{\text{H}}_i$ predicts reliability $\approx0.85$ compared to $R^{\text{MI}}_i\approx0.69$, closer to the empirical value of $0.87$.

	\begin{figure*}[!t]
		\centering
		\begin{tabular}{ccc}
			\includegraphics[width=0.3\textwidth]{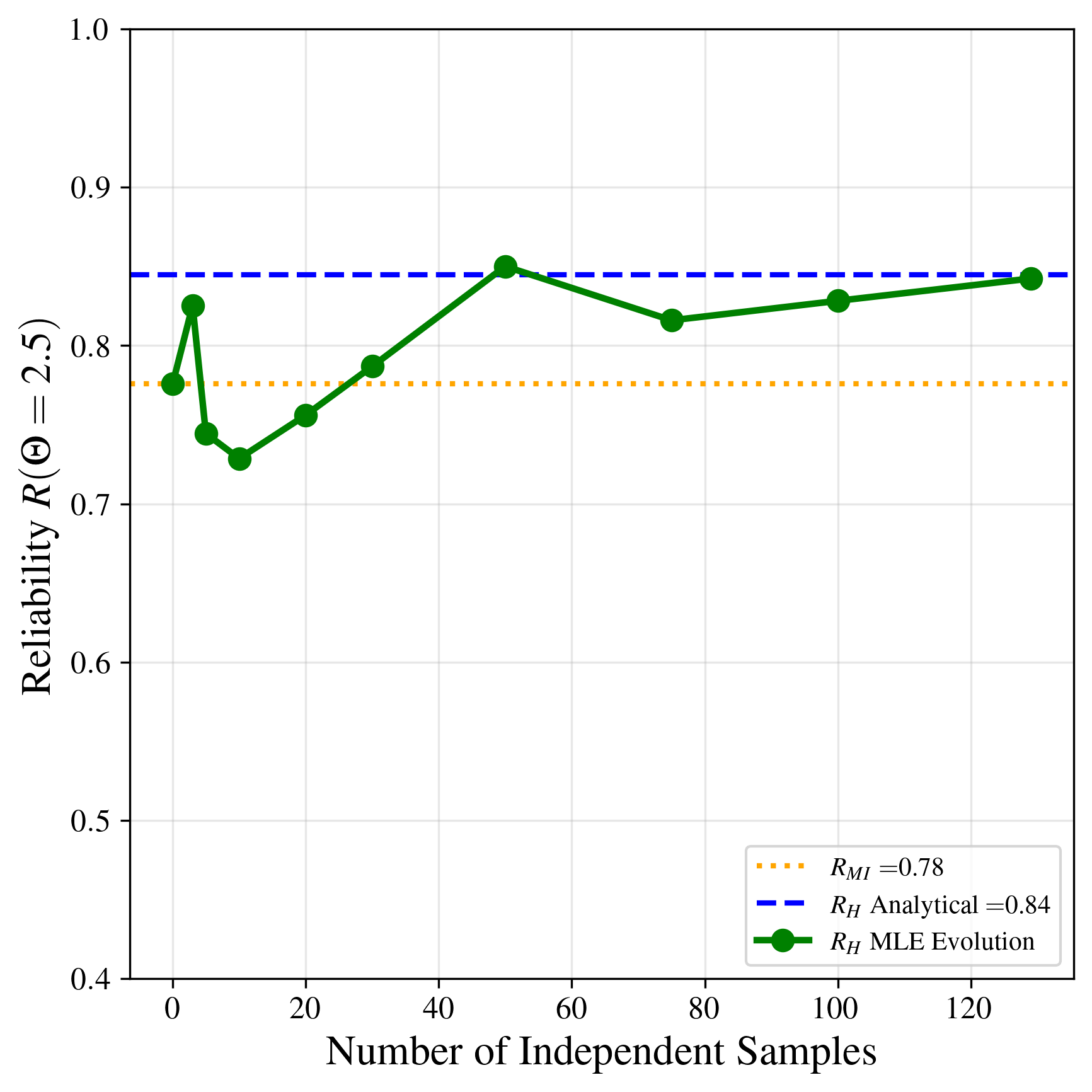} &
			\includegraphics[width=0.3\textwidth]{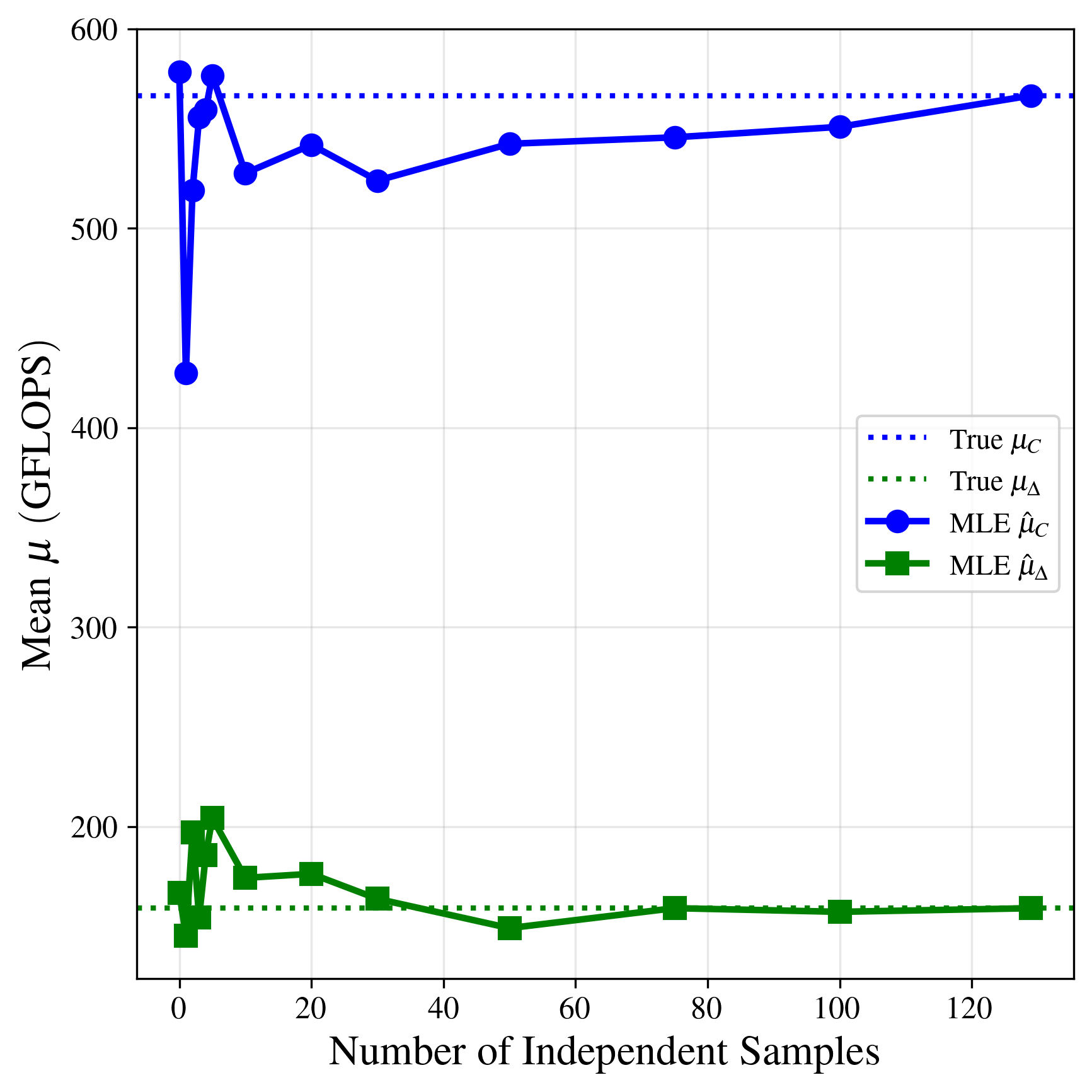} &
			\includegraphics[width=0.3\textwidth]{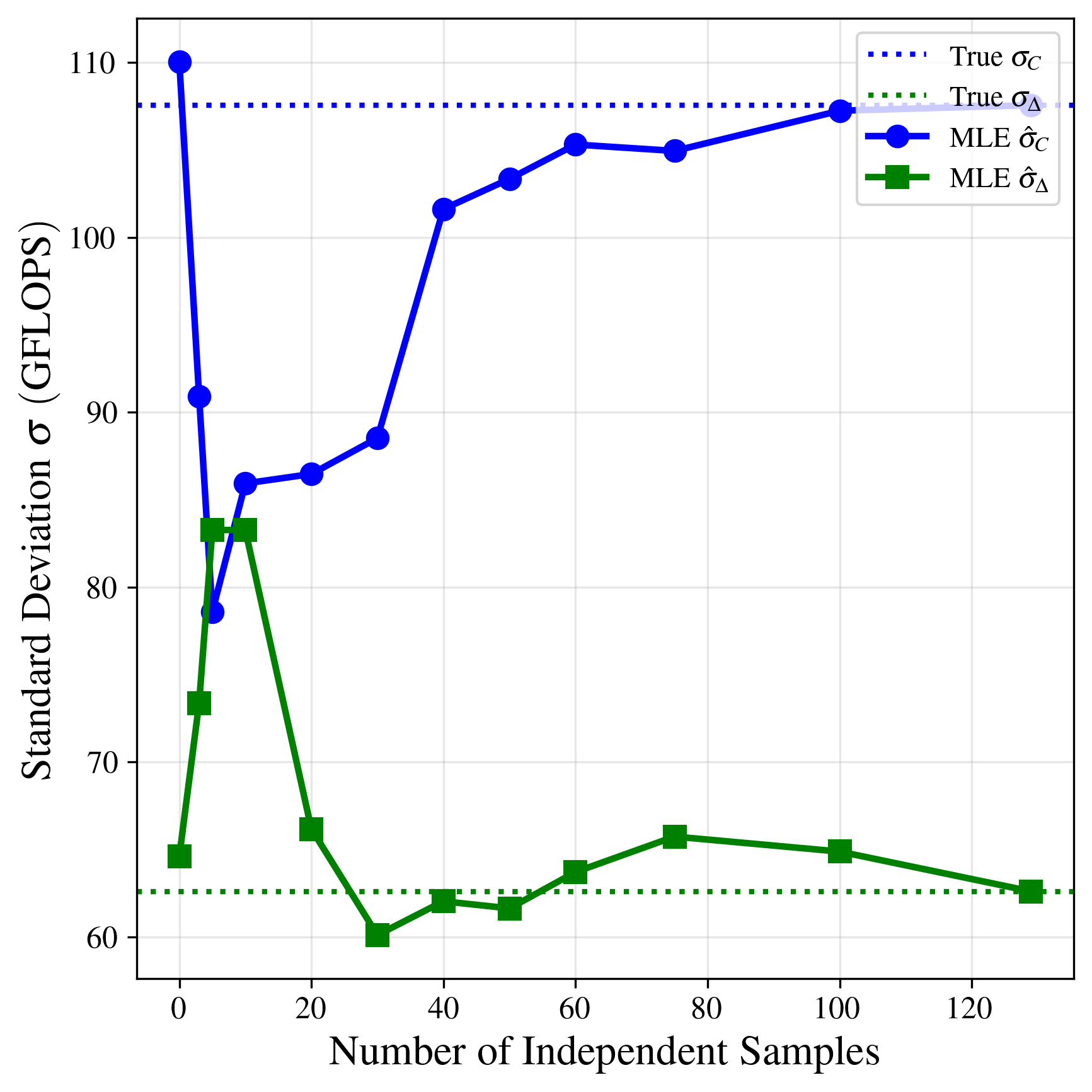} \\
			(a) & (b) & (c)
		\end{tabular}
		\caption{MLE parameter convergence. (a) Reliability estimate evolution. (b) Mean convergence. (c) Standard deviation convergence.}
		\label{fig:mle_convergence}
	\end{figure*}

	Fig.~\ref{fig:mle_convergence} illustrates the progressive refinement of $R^{\text{H}}_i(t, \Theta)$ estimates as historical observations accumulate, using an experiment with threads $\in[4,12]$ and scale $\in[0.4,0.9]$.

	Fig.~\ref{fig:mle_convergence}(a) tracks reliability at $\Theta=2.5$ FPS as a function of independent sample count. With zero samples, the orchestrator defaults to $R^{\text{MI}}_i\approx0.55$ based on uniform assumptions. As samples accumulate, the MLE-based $R^{\text{H}}_i$ estimate rises and converges to the analytical value $R^{\text{H}}_i\approx0.82$ at approximately $n=50$ independent observations. This confirms the practical benefit of historical data: reliability estimates improve by $\approx0.27$ absolute over the bounds-only baseline.

	Fig.~\ref{fig:mle_convergence}(b) and (c) decompose this convergence into the underlying MLE parameters. The mean estimates $\hat{\mu}_{C_i}$ and $\hat{\mu}_{\Delta_i}$ stabilize within 10-20 samples, while the standard deviation estimates $\hat{\sigma}_{C_i}$ and $\hat{\sigma}_{\Delta_i}$ require 30-50 samples to converge. The true parameters (dotted lines) are recovered from the full dataset of 130 independent samples. Starting from uniform assumptions (midpoint of bounds for $\mu$, $(\text{range})/\sqrt{12}$ for $\sigma$), the MLE estimates converge monotonically toward the true values.

	\begin{figure}[!t]
		\centering
		\begin{tabular}{cc}
			\includegraphics[width=0.45\columnwidth]{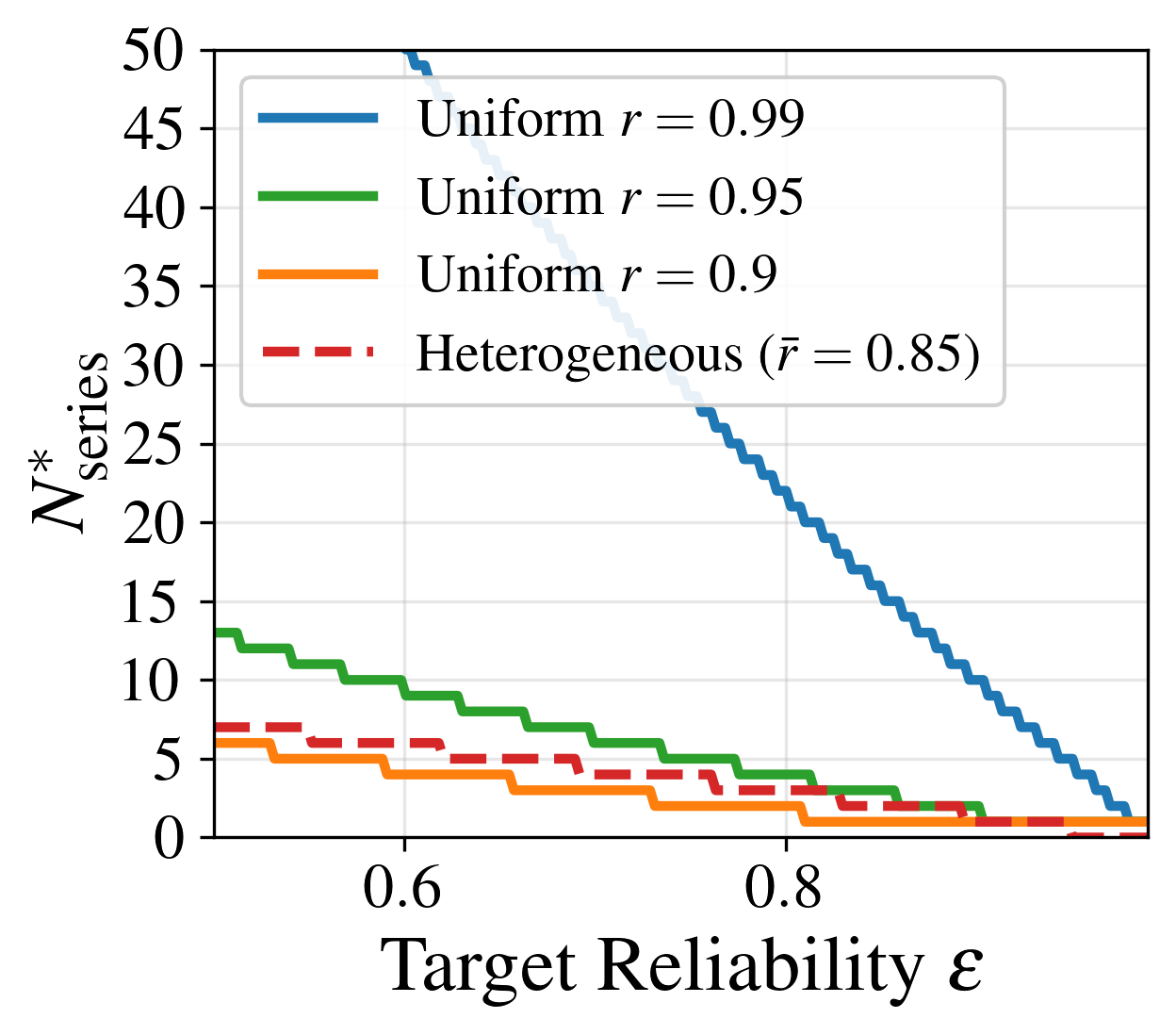} &
			\includegraphics[width=0.45\columnwidth]{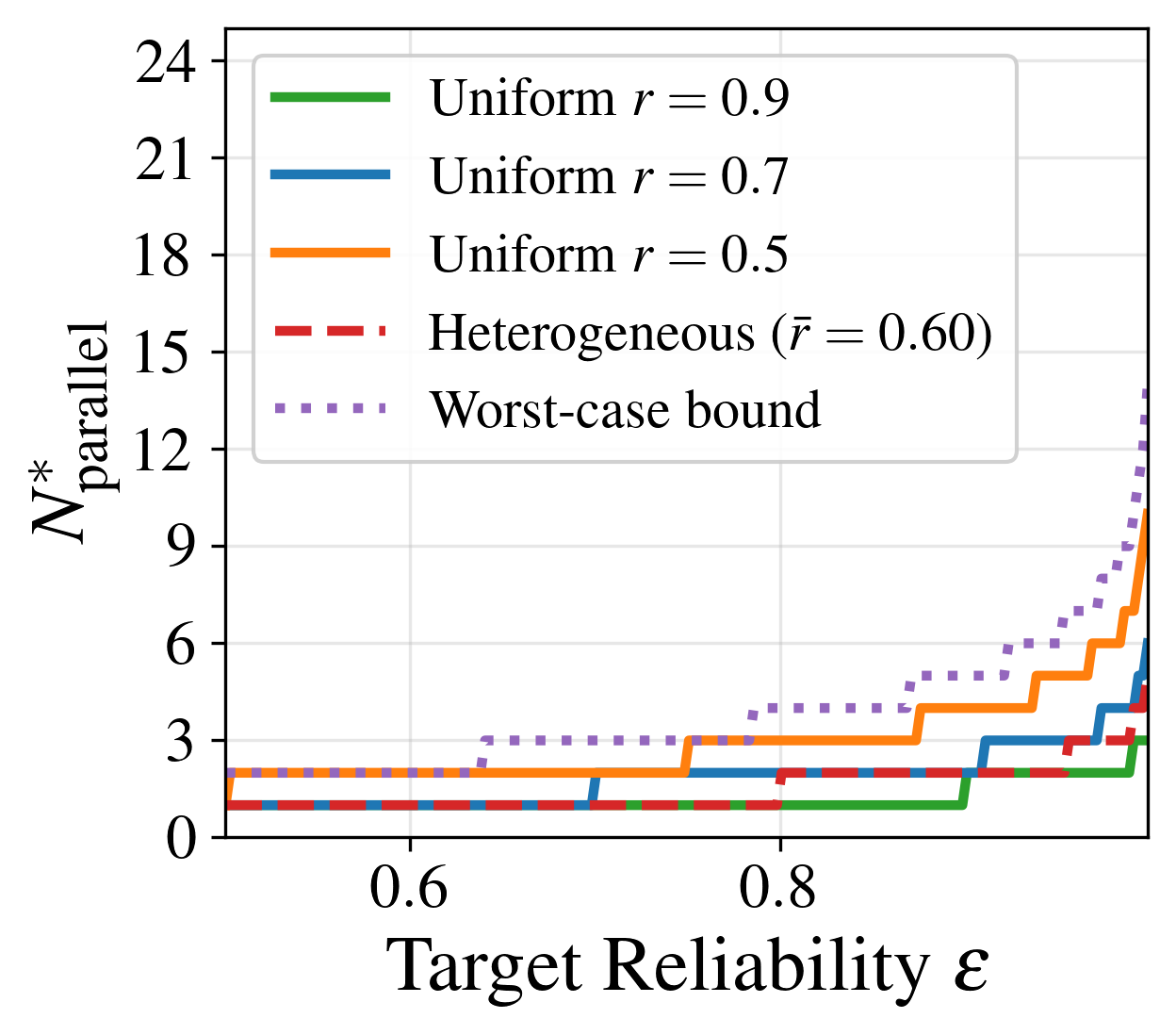} \\
			(a) & (b)
		\end{tabular}
		\caption{Device selection bounds. (a) Series: max devices $N^*_{\mathrm{series}}$. (b) Parallel: min devices $N^*_{\mathrm{parallel}}$.}
		\label{fig:device_selection}
	\end{figure}

	Fig.~\ref{fig:device_selection} validates the device selection bounds from Lemmas 4 and 5, showing the maximum feasible series devices and minimum required parallel devices as functions of target reliability $\varepsilon$.

	Fig.~\ref{fig:device_selection}(a) plots $N^*_{\mathrm{series}}$ for uniform device pools with $R\in\{0.90, 0.95, 0.99\}$ and a heterogeneous pool ($\bar{R}=0.85$, 20 devices with $R\in[0.75,0.95]$). At $\varepsilon=0.9$, a uniform pool with $R=0.99$ supports up to 10 series devices, while $R=0.95$ supports only 2. The heterogeneous case tracks between uniform bounds, confirming that selecting the most reliable devices first maximizes the series configuration size.

	Fig.~\ref{fig:device_selection}(b) plots $N^*_{\mathrm{parallel}}$ for uniform pools with $R\in\{0.50, 0.70, 0.90\}$, a heterogeneous pool ($\bar{R}=0.60$), and the worst-case bound from Eq.~\ref{eq:parallel_upper_bound}. At $\varepsilon=0.99$, achieving high reliability requires 2 devices when $R=0.90$, but 9 devices when $R=0.50$. The heterogeneous case requires fewer devices than the worst-case bound, validating that prioritizing high-reliability devices reduces redundancy requirements.

	\begin{figure*}[!t]
		\centering
		\begin{tabular}{ccc}
			\includegraphics[width=0.3\textwidth]{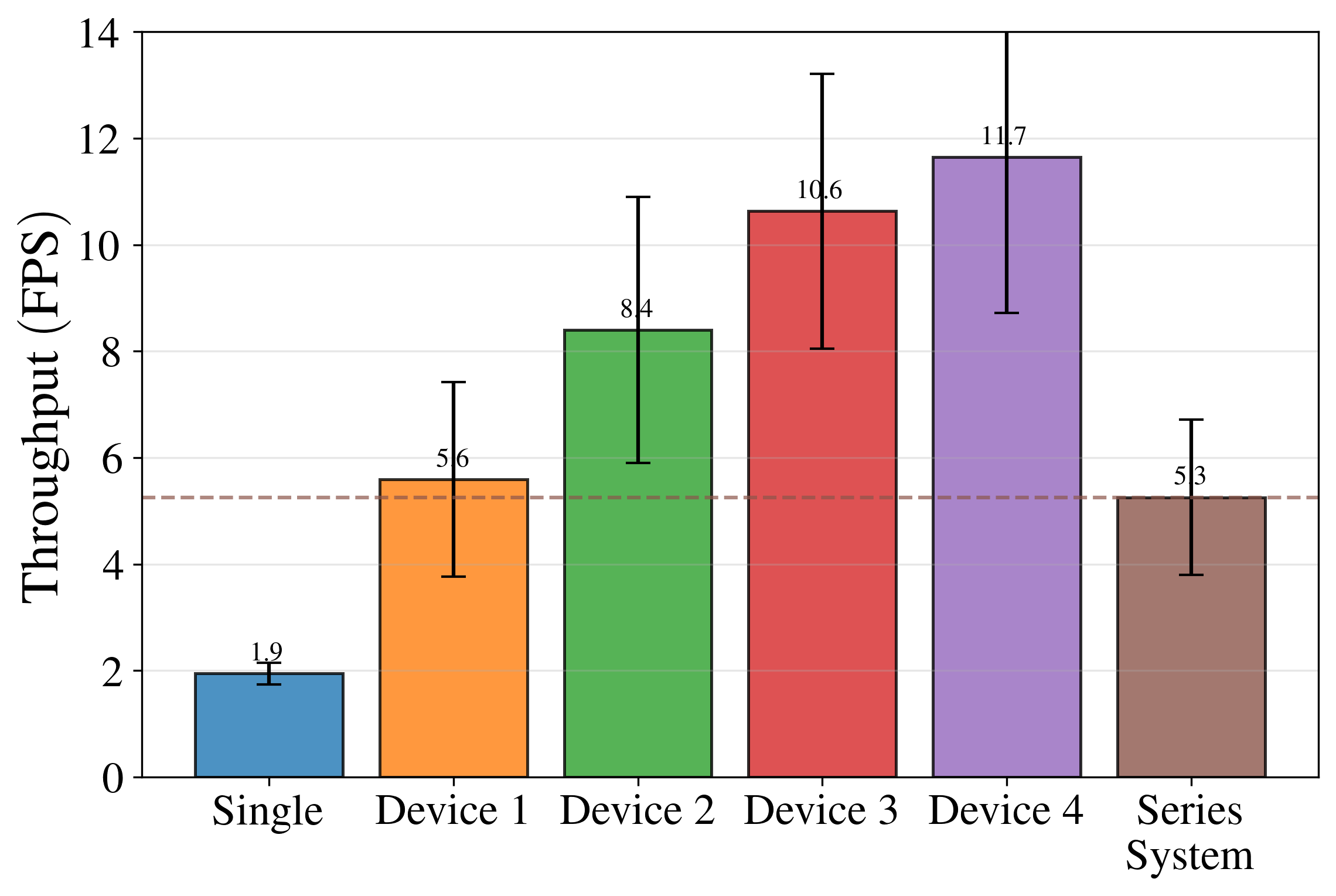} &
			\includegraphics[width=0.3\textwidth]{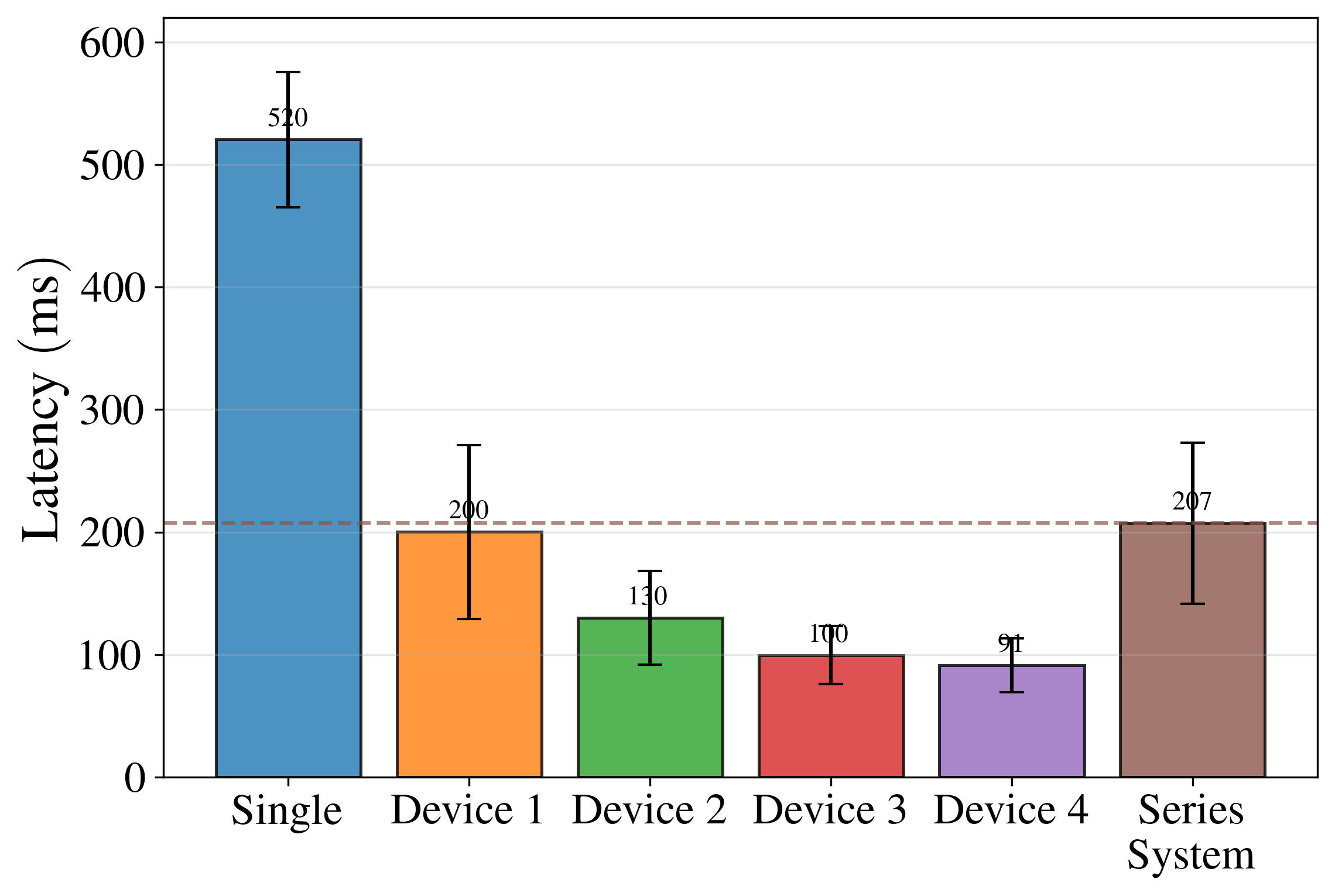} &
			\includegraphics[width=0.3\textwidth]{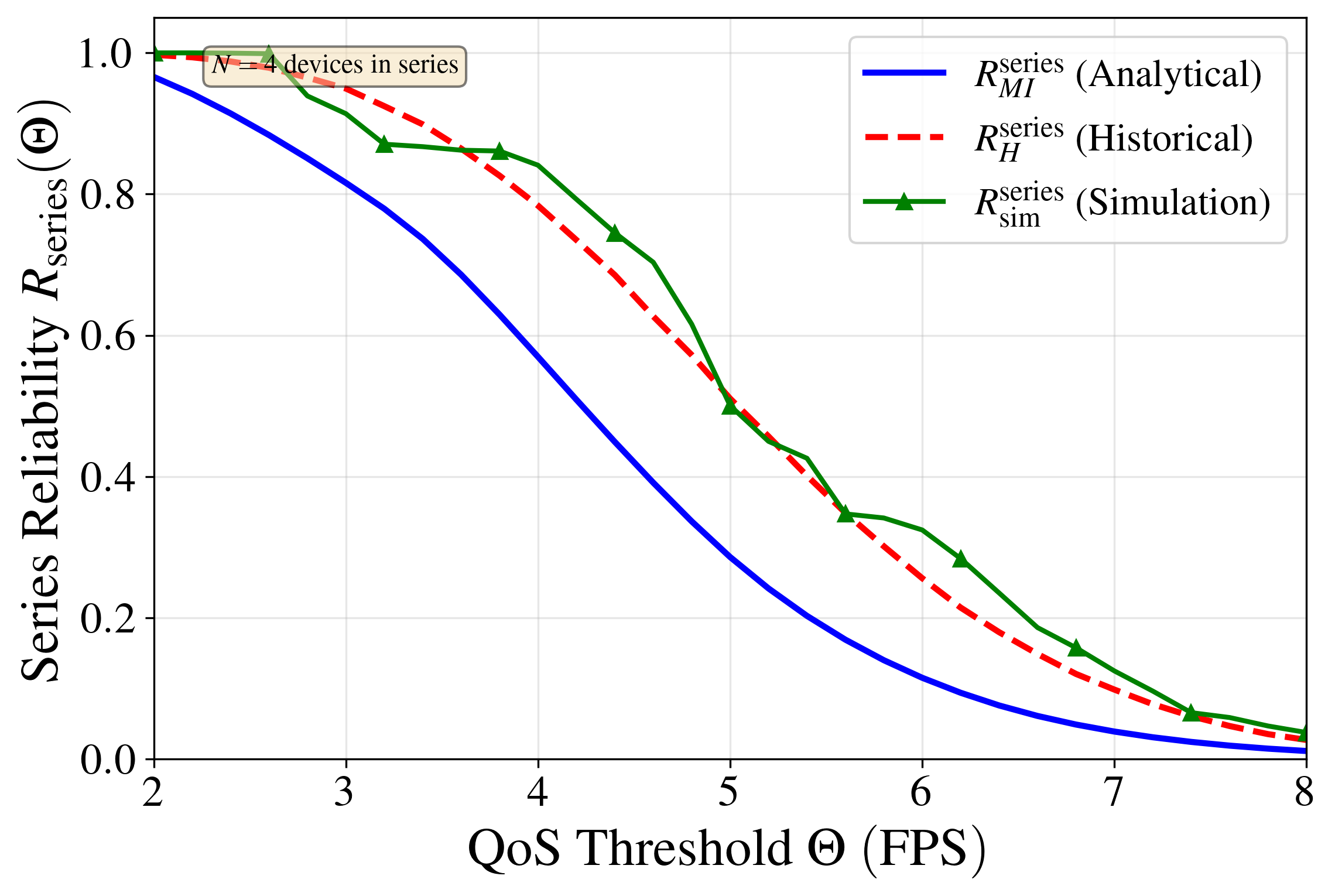} \\
			(a) & (b) & (c)
		\end{tabular}
		\caption{Series system with spatial partitioning. (a) Throughput comparison. (b) Latency comparison. (c) Series reliability validation.}
		\label{fig:series_system}
	\end{figure*}

	Fig.~\ref{fig:series_system} validates the series configuration analysis using a 4-device spatial partitioning deployment for YOLO11m inference. A single device processing full-resolution frames with threads $\in[6,10]$ achieves 1.9 FPS at 520 ms latency. In the series configuration, the frame is partitioned across 4 devices with heterogeneous thread allocations (threads $\in[2,4]$, $[4,6]$, $[6,8]$, $[8,10]$ for Devices 1-4 respectively), each processing 1/4 of the frame.

	Fig.~\ref{fig:series_system}(a) shows individual device throughputs ranging from 5.6 FPS (Device 1) to 11.7 FPS (Device 4). The series system throughput equals the minimum across devices, yielding 5.3 FPS, a $2.8\times$ improvement over single-device processing. Fig.~\ref{fig:series_system}(b) shows corresponding latencies: individual devices achieve 91-200 ms per region, while the series system latency (the maximum across parallel executions) is 207 ms, a $2.5\times$ reduction from the single-device baseline.

	Fig.~\ref{fig:series_system}(c) validates the series reliability formula (Eq.~\ref{eq:series_reliability}). The analytical $R^{\text{MI}}_{\text{series}}(t)$ computed as the product of individual $R^{\text{MI}}_i(t, \Theta)$ values provides a conservative lower bound. The historical $R^{\text{H}}_{\text{series}}(t)$ using MLE-fitted parameters aligns closely with simulation results across the QoS range $\Theta\in[2,8]$ FPS. The multiplicative penalty of series configurations is evident: at $\Theta=5$ FPS, individual device reliabilities exceed 0.8, but the series reliability drops to approximately 0.5 due to the product $\prod_{i=1}^{4} R_i(t, \Theta)$.

	\section{Conclusion}
	\label{sec:conc}
	This paper introduced a formal analytical framework to quantify the computational reliability of AI inference streaming services in \gls{XEC} environments. We defined reliability as the probability of an \gls{XED} meeting a specified \gls{QoS} threshold and developed models for two scenarios: minimal information, where no prior knowledge of worker behavior exists, and historical data, where past observations yield refined estimates. The framework extends from single-device analysis to system-level configurations, providing closed-form expressions for series, parallel, and distributed workload systems. We derived optimal workload partitioning rules that equalize marginal log-reliability across devices, along with analytical bounds for device selection to achieve target system reliability. Experimental validation using real YOLO11m inference under dynamic resource fluctuations confirmed the accuracy of our analytical expressions, with the historical data approach providing progressively refined estimates compared to the bounds-based \gls{MI} model.

	Our work equips \gls{XEC} orchestrators with analytical tools for assessing system feasibility, determining optimal device selection, and configuring workload distribution across multi-device deployments for AI inference at the extreme edge. The framework focuses on computational reliability; extending the analysis to incorporate communication latency and network reliability remains an important direction for future work.
	\section*{Acknowledgment}
	This research was supported by the Natural Sciences and Engineering Research Council of Canada (NSERC) under Grant RGPIN-2025-05001.

	\appendices

	\section{Proof of Lemma~\ref{lemma:streaming_MI_reliability_main}}
	\label{app:proof_lemma1}

	The reliability $R^{\text{MI}}_i(t, \Theta) = P(C_i(t) \geq \Theta \Delta_i(t))$ is obtained by integrating the joint PDF $f_{C_i,\Delta_i}(c,\delta) = \frac{1}{C_i^{\text{range}}\Delta_i^{\text{range}}}$ over the region where $c \geq \Theta \delta$, subject to $C_i^{\min} \leq c \leq C_i^{\max}$ and $\Delta_i^{\min} \leq \delta \leq \Delta_i^{\max}$.

	The condition $c \geq \Theta \delta$ implies $\delta \leq c/\Theta$. Since $c \leq C_i^{\max}$, we must have $\delta \leq C_i^{\max}/\Theta$. Thus, the upper limit for $\delta$ is $\min(\Delta_i^{\max}, C_i^{\max}/\Theta)$. Under the assumption that $C_i^{\min} \leq \Theta \delta$ for the relevant range of $\delta$, the lower integration limit for $c$ is $\Theta \delta$.

	Let $\delta_u = \min(\Delta_i^{\max}, C_i^{\max}/\Theta)$. Then:
	\begin{align}
		R^{\text{MI}}_i(t, \Theta) &= \frac{1}{C_i^{\text{range}}\Delta_i^{\text{range}}} \int_{\Delta_i^{\min}}^{\delta_u}  \int_{\Theta \delta}^{C_i^{\max}} dc \,d\delta \nonumber \\
		&= \frac{1}{C_i^{\text{range}}\Delta_i^{\text{range}}} \int_{\Delta_i^{\min}}^{\delta_u} (C_i^{\max} - \Theta \delta) \,d\delta.
	\end{align}

	Evaluating the integral:
	\begin{align}
		R^{\text{MI}}_i(t, \Theta) &= \frac{1}{C_i^{\text{range}}\Delta_i^{\text{range}}} \left[ C_i^{\max}\delta - \frac{\Theta \delta^2}{2} \right]_{\Delta_i^{\min}}^{\delta_u} \nonumber \\
		&= \frac{1}{C_i^{\text{range}}\Delta_i^{\text{range}}} \biggl[ \left( C_i^{\max}\delta_u - \frac{\Theta \delta_u^2}{2} \right) \nonumber \\
		&\quad -  \left( C_i^{\max}\Delta_i^{\min} - \frac{\Theta (\Delta_i^{\min})^2}{2} \right) \biggr].
	\end{align}

	This yields Eq.~\eqref{eq:MI_Theta_reliability_simplified_formatted}. The derivation relies on $C_i^{\min} \leq \Theta \delta$ for the integration range, so that the lower limit for $c$ is $\Theta \delta$.

	\section{Proof of Lemma~\ref{lemma:streaming_si_reliability_main}}
	\label{app:proof_lemma2}

	For notational brevity, we suppress the time index $t$ in the MLE parameters; all $\mu$ and $\sigma$ terms are understood to be evaluated at time $t$. Assuming $\Theta \delta \ge C_i^{\min}$:
	\begin{align}
		R^{\text{H}}_i(t, \Theta) &= \int_{\Delta_i^{\min}}^{\Delta_i^{\max}} \int_{\Theta \delta}^{C_i^{\max}} f_{C_i}(c; t)f_{\Delta_i}(\delta; t) \,dc \,d\delta \nonumber \\
		&= \frac{1}{\sigma_{C_i} \sigma_{\Delta_i} Z_{C_i} Z_{\Delta_i}} \int_{\Delta_i^{\min}}^{\Delta_i^{\max}} \phi\left(\frac{\delta - \mu_{\Delta_i}}{\sigma_{\Delta_i}}\right) \nonumber \\
		& \qquad \times \left( \int_{\Theta \delta}^{C_i^{\max}} \phi\left(\frac{c - \mu_{C_i}}{\sigma_{C_i}}\right) \,dc \right) \,d\delta.
	\end{align}

	The inner integral evaluates to:
	\begin{equation}
		\sigma_{C_i} \left[ \Phi\left(\frac{C_i^{\max} - \mu_{C_i}}{\sigma_{C_i}}\right) - \Phi\left(\frac{\Theta \delta - \mu_{C_i}}{\sigma_{C_i}}\right) \right].
	\end{equation}

	Substituting and simplifying:
	\begin{align}
		R^{\text{H}}_i(t, \Theta) &= \frac{1}{\sigma_{\Delta_i} Z_{C_i} Z_{\Delta_i}} \Biggl[ \Phi\left(\frac{C_i^{\max} - \mu_{C_i}}{\sigma_{C_i}}\right) \sigma_{\Delta_i} Z_{\Delta_i} \nonumber \\
		&\quad - \int_{\Delta_i^{\min}}^{\Delta_i^{\max}} \phi\left(\frac{\delta - \mu_{\Delta_i}}{\sigma_{\Delta_i}}\right) \Phi\left(\frac{\Theta \delta - \mu_{C_i}}{\sigma_{C_i}}\right) \,d\delta \Biggr]
	\end{align}
	which yields Eq.~\eqref{eq:H_Theta_reliability_presented_earlier_lemma_formatted}.

	\section{Proof of Lemma~\ref{lemma:optimal_partitioning}}
	\label{app:proof_lemma3}

	Taking logarithms, we transform the product maximization into a sum maximization:
	\begin{equation}
		\max_{\boldsymbol{\alpha}} \sum_{i=1}^{N} \ln R_i(t, \alpha_i \Theta) \quad \text{subject to} \quad \sum_{i=1}^{N} \alpha_i = 1.
	\end{equation}

	Forming the Lagrangian:
	\begin{equation}
		\mathcal{L} = \sum_{i=1}^{N} \ln R_i(t, \alpha_i \Theta) - \lambda\left(\sum_{i=1}^{N} \alpha_i - 1\right).
	\end{equation}

	The first-order conditions $\partial \mathcal{L}/\partial \alpha_i = 0$ yield:
	\begin{equation}
		\frac{\Theta}{R_i(t, \alpha_i \Theta)} \cdot \frac{\partial R_i(t, \alpha_i \Theta)}{\partial (\alpha_i \Theta)} = \lambda \quad \text{for all } i.
	\end{equation}

	Since $\lambda$ is the same Lagrange multiplier for all devices, we have:
	\begin{equation}
		\frac{\partial R_i(t, \alpha_i \Theta)/\partial \Theta}{R_i(t, \alpha_i \Theta)} = \frac{\partial R_j(t, \alpha_j \Theta)/\partial \Theta}{R_j(t, \alpha_j \Theta)} \quad \forall\, i, j.
	\end{equation}

	This is the equal marginal log-reliability condition of Eq.~\eqref{eq:optimal_partitioning_condition}.

	\section{Proof of Lemma~\ref{lemma:series_feasibility}}
	\label{app:proof_lemma4}

	Since adding devices to a series system can only decrease reliability (each additional device introduces another potential failure point), the optimal selection strategy uses the most reliable devices available.

	Sorting the devices in descending order of reliability ensures that $\prod_{i=1}^{N} R_{(i)}(t, \Theta)$ is maximized over all possible size-$N$ subsets of the candidate pool. Any other subset would include at least one device with lower reliability, resulting in a smaller product.

	The constraint $R_{\text{series}}(t) = \prod_{i=1}^{N} R_{(i)}(t, \Theta) \geq \varepsilon$ directly yields the feasibility condition of Eq.~\eqref{eq:series_feasibility}. The maximum $N^*_{\text{series}}$ is the largest $N$ for which this product remains above the target $\varepsilon$, giving Eq.~\eqref{eq:series_max_general}.

	\section{Proof of Lemma~\ref{lemma:parallel_selection}}
	\label{app:proof_lemma5}

	Since adding devices to a parallel system can only increase reliability (each additional device provides another opportunity for success), the optimal selection strategy prioritizes the most reliable devices.

	For the $N$ most reliable devices (sorted in descending order), the system reliability is:
	\begin{equation}
		R_{\text{parallel}}(t) = 1 - \prod_{i=1}^{N}(1 - R_{(i)}(t, \Theta)) = 1 - P_N(t).
	\end{equation}

	The constraint $R_{\text{parallel}}(t) \geq \varepsilon$ is equivalent to:
	\begin{equation}
		1 - P_N(t) \geq \varepsilon \quad \Leftrightarrow \quad P_N(t) \leq 1 - \varepsilon.
	\end{equation}

	Since $P_N(t)$ decreases monotonically as $N$ increases (each additional term $(1-R_{(i)}) < 1$ reduces the product), the minimum $N^*_{\text{parallel}}$ is the smallest $N$ achieving this condition, yielding Eq.~\eqref{eq:parallel_exact}.

	\printbibliography

\end{document}